\documentclass[10pt,a4paper]{article}

\usepackage{url}
\usepackage[english]{babel}
\usepackage[utf8]{inputenc}
\usepackage[T1]{fontenc}
\usepackage{graphicx}
\usepackage{geometry}
\usepackage{tabularx}
\usepackage{psfrag}
\usepackage{fancyhdr}
\usepackage{xcolor}
\usepackage{sistyle}
\usepackage{amsfonts}
\usepackage{amsmath}
\usepackage{amssymb}
\usepackage{dsfont}
\usepackage{booktabs}
\usepackage{multirow}
\usepackage{amsthm}
\usepackage{caption}
\usepackage{subcaption}
\usepackage{pifont}
\usepackage[ruled]{algorithm2e}
\usepackage{enumerate}
\usepackage{enumitem}
\usepackage[numbers, comma, sort&compress]{natbib}
\usepackage{bbold}
\usepackage{bbm}
\usepackage[bottom]{footmisc}
\usepackage{tikz}
	\usetikzlibrary{backgrounds,automata,shapes}
	\usetikzlibrary{decorations.pathreplacing, arrows}
\usepackage{hyperref}
\usepackage{mathrsfs}
\usepackage{bigints}
\usepackage{xr}

\theoremstyle{theorem}

\newtheorem{lemma}{Lemma}
\newtheorem{theorem}{Theorem}
\theoremstyle{definition}
\newtheorem{definition}{Definition}

\theoremstyle{remark}
\newtheorem{remark}{Remark}

\newcommand{\e}{\mathrm{e}}

\newcommand{\GAF}{\mathsf{GAF}}
\newcommand{\wn}{\zeta}
\newcommand{\dd}{\mathrm{d}}
\newcommand{\Ell}{L^2}
\newcommand{\ii}{\mathrm{i}}

\setcounter{tocdepth}{2}
\graphicspath{{figures/}}

\setlength{\textwidth}{16cm}
\setlength{\hoffset}{-5.5mm}

\begin{document}

\title{The Analytic Stockwell Transform and its Zeros}

\author{Ali Moukadem\thanks{A. Moukadem, J.-B. Courbot and N. Juillet are with IRIMAS UR 7499, Université de Haute-Alsace, Mulhouse, France (e-mail: firstname.lastname@uha.fr);},
Barbara Pascal\thanks{B. Pascal is with Nantes Université, École Centrale Nantes, CNRS, LS2N, UMR 6004, F-44000 Nantes, France, (e-mail: barbara.pascal@cnrs.fr).}, Jean-Baptiste Courbot\footnotemark[1], Nicolas Juillet\footnotemark[1]
}%

\maketitle

\abstract{The Stockwell Transform is a time--frequency representation resulting from an hybridization between the Short-Time Fourier Transform and the Continuous Wavelet Transform. 
Instead of focusing on energy maxima, an unorthodox line of research has recently shed the light on the zeros of time--frequency transforms, leading to fruitful theoretical developments combining probability theory, complex analysis and signal processing. 
While the distributions of zeros of the Short-Time Fourier Transform and of the Continuous Wavelet Transform of white noise have been precisely characterized, that of the Stockwell Transform of white noise zeros remains unexplored.
To fill this gap, the present work proposes a characterization of the distribution of zeros of the Stockwell Transform of white noise taking advantage of a novel generalized Analytic Stockwell Transform.
First of all, an analytic version of the Stockwell Transform is designed.
Then, analyticity is leveraged to establish a connection with the hyperbolic Gaussian analytic function, whose zero set is invariant under the isometries of the Poincaré disk.
Finally, the theoretical spatial statistics of the zeros of the hyperbolic Gaussian analytic function and the empirical statistics of the zeros the Analytic Stockwell Transform of white noise are compared through intensive Monte Carlo simulations, showing the practical relevance of the established connection.
A documented Python toolbox has been made publicly available by the authors.
}

\vspace{3mm}
\noindent \textbf{Keywords} -- Time--frequency analysis; Stockwell transform; Gaussian Analytic Functions; Point processes; Hyperbolic geometry

\vspace{-2mm}
\tableofcontents

\section{Introduction}
\label{sec:intro}

Time--frequency analysis is a cornerstone of nonstationary signal processing~\cite{flandrin1998time,Grochenig2001,flandrin2018explorations}.
Crucially, contrary to the Fourier spectrum describing the overall frequency content but loosing time localization, time--frequency representations aim at accounting for both the temporal dynamics and the time-varying frequency content of a signal.
Among the most widely used time--frequency representations are the \emph{spectrogram}, obtained as the squared modulus of the Short-Time Fourier Transform~\cite{flandrin1998time,flandrin2018explorations}, and the time--scale \emph{scalogram}, which is the squared modulus of the Wavelet Transform~\cite{stephane1999wavelet}.
These joint representations appeared particularly well-suited to analyze physical phenomena as diverse as bat echolocation~\cite{flandrin1988time}, earthquakes~\cite{wu2017s} and gravitational waves~\cite{chassande2006best}.

The search for \emph{analytic} signal transforms is a long-standing problem both in signal processing and harmonic analysis~\cite{Ascensi2009,mallat2015phase,waldspurger2017phase,Holighaus2019,Bardenet2021,Pascal2022} as analyticity is a key property in phase retrieval problems and phaseless reconstruction~\cite{mallat2015phase,waldspurger2017phase,Holighaus2019}.
Recently, this interest has been strongly renewed by an unorthodox research path which emerged at the interface between time--frequency analysis and stochastic geometry.
This novel line of research consists in shifting the historical interest in energy maxima toward \emph{zeros}, which can be seen as ``silent'' points in the time--frequency plane~\cite{gardner2006sparse}.
Notably, the seminal works~\cite{gardner2006sparse,Flandrin2015} have shed light on the remarkably even distribution of the zeros of the spectrogram of  \emph{white noise}.
Not only this observation received a rigorous mathematical explanation~\cite{Bardenet2020} elegantly combining complex analysis and probability theory, but the established link between time--frequency analysis and random complex analytic functions opened up a novel line of research in the theory of signal representation and led to the characterization of the zeros of many other signal representations, among which the Wavelet Transform~\cite{Abreu2018,Koliander2019,Bardenet2020,Bardenet2021}.
Further, this fruitful connection has even motivated the design of new representations, and in particular of discrete representations with desirable algebraic properties~\cite{Bardenet2021,Pascal2022}, as well as original contributions in harmonic analysis~\cite{abreu2022local}.

The present work aims at complementing these studies by characterizing the distribution of the zeros of the time--frequency \emph{Stockwell Transform}~\cite{Stockwell1996}, which genuinely hybridizes the Short-Time Fourier Transform and the Wavelet Transform so as to provide an original multi-resolution analysis with a local phase having an absolute reference~\cite{Stockwell1996,Stockwell2007,Riba2015}.
The motivation for the introduction of the Stockwell transform was to overcome a major practical drawback of the Short-Time Fourier Transform: for low (resp. high) frequency the frequency (resp. time) resolution is coarse, impairing precise localization in the time-frequency plane.
To that aim the historical design relies on a Gaussian analysis window~\cite{Stockwell1996}~\cite[Section 3.3]{shah2022wavelet}, whose width is varied as a function of frequency so as to produce a higher frequency resolution at lower frequencies and a higher time resolution at higher frequencies while preserving the absolute reference of the phase, which is lost when using the Wavelet Transform.
For a comprehensive and pedagogical presentation of the Wavelet and Stockwell Transforms, and links between both, the reader is referred to~\cite{stephane1999wavelet} and to the more recent textbook~\cite{shah2022wavelet}.
Several extensions of the Stockwell Transform have been proposed in the literature~\cite{Pinnegar2003,Battisti2016,liu2019self, Moukadem2015, Zidelmal2017}, to cite but a few. The main objective of these extensions is to enable to adapt closely to the characteristics of the analyzed signal and to the time--frequency patterns of interest.
Thanks to its solid theoretical ground, high versatility and ease of interpretation, the Stockwell Transform has progressively gained popularity not only in signal processing~\cite{Wei2021, Huang2021,Wei2022}, but also in applied mathematics~\cite{Riba2015,Shah2021} and mathematical physics~\cite{Guo2010,Hutnikova2015}.

The main contributions of the present work are, first, to construct an \emph{Analytic} Stockwell Transform; second, thanks to a rigorous construction of an adapted continuous white noise, to characterize the zeros of the Analytic Stockwell Transform of white noise by proving that they coincide with the zeros of the so-called \emph{hyperbolic Gaussian analytic function} hence demonstrating that the zeros are invariant under isometries of the Poincaré disk; third, to provide numerical evidence supporting this connection by comparing through intensive Monte Carlo simulations the theoretical \emph{pair correlation function} of the zeros of the hyperbolic Gaussian analytic function and the empirical pair correlation function of the zeros of the Analytic Stockwell Transform of white noise.
In Section~\ref{sec:GST}, after reminding the definition of the Generalized Stockwell Transform, the existence of an Analytic Stockwell Transform is proven by leveraging the intrinsic link between the Stockwell and Wavelet transforms.
Then, the distribution of the zeros of the Analytic Stockwell Transform of white noise is characterized in Section~\ref{sec:zeros}, taking advantage of a well-suited construction of white noise and of a connection with Gaussian analytic functions.
Finally, numerical experiments are implemented in Section~\ref{sec:numeric}, providing illustrations of the zero pattern of the Analytic Stockwell Transform and quantitative support to the results of Section~\ref{sec:zeros} by systematically comparing the theoretical and empirical spatial statistics of zeros in hyperbolic geometry through intensive Monte Carlo simulations.

\noindent \textbf{Notation.} The set of real (resp. complex) numbers is denoted by $\mathbb{R}$ (resp. $\mathbb{C}$), while the set of positive real numbers (resp. complex number with positive imaginary part) is denoted by $\mathbb{R}_+$ (resp. $\mathbb{C}_+$).
The set of nonnegative (resp. positive) integers is denoted by $\mathbb{N}$ (resp. $\mathbb{N}^*$).
The Hilbert space of functions which are square-integrable with respect to the Lebesgue measure on $\mathbb{R}$ equipped with the scalar product $\langle f, g \rangle = \int_{\mathbb{R}} f(t) \overline{g(t)} \, \dd t$ is denoted by $\Ell(\mathbb{R})$.
The space of integrable functions is denoted by $L^1(\mathbb{R})$.
For $f \in \Ell(\mathbb{R})\cap L^1(\mathbb{R})$, the \emph{Fourier transform} of $f$, denoted by $\widehat{f}$, is defined as: $\forall  \nu \in \mathbb{R}, \, \widehat{f}(\nu) = (2\pi)^{-1/2}\int_{\mathbb{R}} f(t) \e^{-2\pi \ii \nu t} \, \dd t$.
The standard \emph{complex} Gaussian distribution of zero mean and unit variance is denoted by $\mathcal{N}_{\mathbb{C}}(0,1)$.
The closed disk of radius $R > 0$ is denoted by $\overline{\mathsf{B}}(0,R) = \lbrace z \in \mathbb{C} : \lvert z \rvert < R \rbrace$.
For $\Lambda \subseteq \mathbb{C}$, the space of analytic functions on $\Lambda$ is denoted by $\mathcal{A}(\Lambda)$.

\section{The Analytic Stockwell Transform}
\label{sec:GST}

First, the formal definition of the Generalized Stockwell Transform is reminded.
Then, connecting the Stockwell Transform to the Wavelet Transform and leveraging analyticity results obtained for Cauchy wavelet, an Analytic Stockwell Transform is constructed.

\subsection{Preliminary: definition of the Generalized Stockwell Transform}

\begin{definition}
Let $\varphi \in \Ell(\mathbb{R})$. 
The \emph{Generalized Stockwell Transform} with analysis window $\varphi$ of 
$f \in \Ell(\mathbb{R})$ is defined as
\begin{equation} 
\label{eq:def_GST}
    S_\varphi f : \left\lbrace
\begin{array}{rcl}
     \mathbb{R} \times \mathbb{R}_+ & \rightarrow & \mathbb{C}  \\
     (x, \xi) & \mapsto & \xi \int_{-\infty}^\infty f(t)  \overline{\varphi (\xi(t-x))} \e^{-2\pi \ii t\xi} \, \dd t. 
\end{array}
    \right.
\end{equation}
\end{definition}

\begin{remark}
The seminal Stockwell transform introduced in~\cite{Stockwell1996} corresponds to the standard Gaussian analysis window $\varphi(t) = (2\pi)^{-1/2} \e^{-t^2/2}$.
\end{remark}

For any $x \in \mathbb{R}$, $\xi,\gamma \in \mathbb{R}_+$, let the \emph{translation}, \emph{modulation} and \emph{dilatation} automorphisms of $\Ell(\mathbb{R})$ be respectively defined as
\begin{align}
\begin{split}
    \mathbf{T}_xh(t) = h(t-x); \quad
    \mathbf{M}_\xi h(t) = \e^{-2\pi \ii \xi t} h(t);  \quad
    \mathbf{D}_\gamma h(t) = \gamma h(\gamma t),
\end{split}
\end{align}
Then, the Generalized Stockwell Transform of Equation~\eqref{eq:def_GST} can be interpreted as the decomposition of the signal onto family of functions generated by translations, modulations and dilatations of the analysis window, that is:
 \begin{equation}
 \label{eq:stockwell1}
   S_\varphi f(x,\xi)  =\langle  f, \mathbf{M}_\xi \mathbf{T}_x \mathbf{D}_{\xi} \varphi \rangle.
 \end{equation}
 
Note the modulation operator $\mathbf{M}_\xi$, which essentially differentiates the Stockwell transform from the Wavelet transform and provides an absolute reference for the phase information \cite{Stockwell1996}.

\subsection{Existence of an Analytic Stockwell Transform}
\label{ssec:AST}

The design of an Analytic Stockwell Transform is based on the \emph{Cauchy} wavelet, also referred to as the \emph{Daubechies-Paul} wavelet in the literature~\cite{daubechies1988time}.

\begin{definition}
Let $\beta \in \mathbb{R}$ be such that  $\beta > 0$.
The \emph{Cauchy} wavelet of parameters $\beta$ is defined as the unique function $\psi_{\beta} \in L^2(\mathbb{R})$ whose Fourier transform reads
\begin{align}
\label{eq:def-cauchy}
    \widehat{\psi}_{\beta}(\nu) = 
    \left\lbrace 
    \begin{array}{cl}
    \nu^{\beta}\e^{-2\pi \nu} & \text{if } \nu \geq 0 \\
    0 & \text{otherwise.}
    \end{array}
    \right.
\end{align}
\end{definition}

\begin{remark}
The condition $\beta > 0$ ensures the admissibility of the wavelet, i.e.,
\begin{align}
\int_{\mathbb{R}} \left\lvert \widehat{\psi}_\beta(\nu) \right\rvert^2 \, \frac{\mathrm{d}\nu}{\lvert \nu \rvert } < \infty
\end{align}
and that $\widehat{\psi}_{\beta} \in L^2(\mathbb{R})$.
Then, the bijectivity of the Fourier transform from $L^2(\mathbb{R})$ to $L^2(\mathbb{R})$ yields the existence of a unique $\psi_{\beta}$ satisfying~\eqref{eq:def-cauchy}.
\end{remark}

\begin{theorem}
\label{thm:AST}
    Let $\beta > 0$ and $\psi_{\beta} \in L^2(\mathbb{R})$ be the Cauchy wavelet of parameter $\beta$. Define the \emph{modulated} Cauchy wavelet $\varphi_{\beta}(t) = \psi_{\beta}(t) \e^{-2\pi \ii t}$.
     There exists a smooth nonvanishing function $\lambda : \mathbb{R} \times \mathbb{R}_+ \rightarrow \mathbb{C}$ and a conformal mapping $ \vartheta : \mathbb{C}_+ \rightarrow \mathbb{D}$ such that any signal $f \in \Ell(\mathbb{R})$ can be written as
    \begin{align}
    \label{eq:S-lambda-W}
       \forall (x,\xi) \in \mathbb{R} \times \mathbb{R}_+, \quad S_{\varphi_{\beta}} f(x,\xi) = \lambda(x,\xi) \times F\left(\vartheta(x + \ii \xi^{-1})\right)
    \end{align}
    with $F : \mathbb{D} \rightarrow \mathbb{C}$ an \emph{analytic} function on the unit disk.
\end{theorem}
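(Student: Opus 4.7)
The plan is to exploit the fact that the modulation $\e^{-2\pi \ii t}$ built into $\varphi_\beta$ is precisely engineered to cancel the modulation operator $\mathbf{M}_\xi$ appearing in the Stockwell transform, so that $S_{\varphi_\beta}$ essentially reduces to a wavelet transform with the pure Cauchy wavelet $\psi_\beta$. Analyticity will then follow from the classical Daubechies--Paul property of the Cauchy wavelet transform, whose key driver is the one-sided spectral support $\widehat{\psi_\beta}(\nu)=0$ for $\nu<0$.

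First, I would substitute $\varphi_\beta(t) = \psi_\beta(t)\e^{-2\pi \ii t}$ into \eqref{eq:def_GST}. The two complex exponentials $\e^{2\pi \ii \xi(t-x)}$ (coming from $\overline{\varphi_\beta(\xi(t-x))}$) and $\e^{-2\pi \ii \xi t}$ combine, causing all $t$-dependence in the phase to cancel and leaving
$$S_{\varphi_\beta} f(x,\xi) \;=\; \xi\, \e^{-2\pi \ii \xi x}\int_{\mathbb{R}} f(t)\, \overline{\psi_\beta(\xi(t-x))}\,\dd t.$$
Passing to the Fourier side via Plancherel, using the scaling/translation rules for the Fourier transform, and plugging in the explicit form of $\widehat{\psi_\beta}$, the integral becomes (up to a harmless constant from the paper's Fourier normalization)
$$S_{\varphi_\beta} f(x,\xi) \;=\; \xi^{-\beta}\,\e^{-2\pi \ii \xi x}\,G\!\left(x+\ii\xi^{-1}\right), \qquad G(z) \;:=\; \int_0^\infty \widehat{f}(\nu)\,\nu^\beta\,\e^{2\pi \ii \nu z}\,\dd\nu.$$
Setting $\lambda(x,\xi) := \xi^{-\beta}\,\e^{-2\pi \ii \xi x}$ already provides a smooth nonvanishing function on $\mathbb{R}\times\mathbb{R}_+$, as required.

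It then remains to verify that $G$ extends to an analytic function on $\mathbb{C}_+$ and to transport it to the unit disk. For any $z \in \mathbb{C}_+$, the factor $\nu^\beta\e^{2\pi \ii \nu z}$ enjoys the exponential decay $\e^{-2\pi\nu\,\mathrm{Im}(z)}$ and hence lies in $L^2(\mathbb{R}_+)$, so $G(z)$ is absolutely convergent by Cauchy--Schwarz since $\widehat{f}\in L^2$. Holomorphy in $z$ then follows from standard differentiation under the integral sign, the bounds being uniform on horizontal strips $\{\mathrm{Im}(z)\geq\delta>0\}$ by dominated convergence. Finally, composing with any conformal bijection $\vartheta : \mathbb{C}_+ \to \mathbb{D}$ -- for instance the Cayley transform $\vartheta(z) = (z-\ii)/(z+\ii)$ -- and setting $F := G \circ \vartheta^{-1} \in \mathcal{A}(\mathbb{D})$ completes the construction, since $x+\ii\xi^{-1}\in\mathbb{C}_+$ whenever $\xi>0$.

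I do not foresee any conceptual obstacle; the bulk of the work sits in the Fourier bookkeeping of step two, where the scaling, translation and modulation factors must be tracked accurately under the paper's specific Fourier convention. The single structural insight doing all of the heavy lifting is that modulating $\psi_\beta$ by $\e^{-2\pi\ii t}$ exactly neutralizes the Stockwell modulation, turning $S_{\varphi_\beta} f$ into an object on which the one-sided spectral support of $\widehat{\psi_\beta}$ bestows the Daubechies--Paul analyticity on $\mathbb{C}_+$.
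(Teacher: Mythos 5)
Your proposal is correct and follows essentially the same route as the paper: cancel the Stockwell modulation against the built-in modulation of $\varphi_\beta$ to reduce $S_{\varphi_\beta}f$ to the Cauchy wavelet transform, obtain analyticity on $\mathbb{C}_+$ from the one-sided spectrum, and transport to $\mathbb{D}$ via the Cayley transform with $\lambda(x,\xi)=\xi^{-\beta}\e^{-2\pi\ii\xi x}$. The only difference is that you prove the analyticity of $G(z)=\int_0^\infty\widehat{f}(\nu)\nu^\beta\e^{2\pi\ii\nu z}\,\dd\nu$ directly (a computation the paper in fact carries out later, in the proof of Lemma~\ref{lem:def_fn}), whereas the paper cites~\cite{mallat2015phase,Holighaus2019} for this step.
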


In words, Theorem~\ref{thm:AST} states that, up to multiplication by a smooth nonvanishing factor and composition with a conformal transformation, the Generalized Stockwell Transform of analyzing window $\varphi_{\beta}$  of any finite energy signal is an analytic function on the unit disk $\mathbb{D}$.

\begin{proof}
First, remark that for $f\in\Ell(\mathbb{R})$, $\forall (x,\xi) \in \mathbb{R}\times\mathbb{R}_+$
\begin{align}
\begin{split}
\label{eq:AST-to-CWT}
     S_{\varphi_{\beta}} f(x,\xi) &= \xi \int_{-\infty}^\infty f(t) \overline{\varphi_\beta(\xi(t-x))} \e^{-2\pi \ii \xi t} \, \dd t\\
     &= \xi \e^{- 2\pi \ii \xi x} \int_{-\infty}^\infty f(t) \overline{\varphi_\beta(\xi(t-x)) \e^{2\pi \ii \xi (t - x)}} \, \dd t\\
     &= \xi \e^{- 2\pi \ii \xi x} \int_{-\infty}^\infty f(t) \overline{\psi_\beta(\xi(t-x) )} \, \dd t\\
     &=\sqrt{\xi} \e^{- 2\pi \ii \xi x} \frac{1}{\sqrt{\xi^{-1}}} \int_{-\infty}^\infty f(t) \overline{\psi_\beta\left(\frac{t-x}{\xi^{-1}} \right)} \, \dd t\\
     & = \sqrt{\xi} \e^{- 2\pi \ii \xi x} W_{\psi_\beta} f(x, \xi^{-1})
     \end{split}
\end{align}
where $W_{\psi_{\beta}} f$ is the Cauchy \emph{Wavelet} Transform of $f$.

Then, from~\cite[Section 2.2]{mallat2015phase} and ~\cite[Theorem~2]{Holighaus2019}, the Cauchy Wavelet Transform of $f \in \Ell(\mathbb{R})$ is analytic on the upper-half complex plane, that is $G : z = x+\ii y \mapsto y^{-\beta-1/2}W_{\psi_{\beta}} f(x,y)$    is an analytic function on $\mathbb{C}_+$.
Moreover, let $\vartheta : \mathbb{C}_+ \rightarrow \mathbb{D}$ be the \emph{Cayley} transform defined as 
\begin{align}
\label{eq:Cayley}
\vartheta(z)= \frac{z - \ii}{z + \ii}.
\end{align}
Then, $\vartheta$ is a conformal and invertible mapping from the upper-half plane to the unit disk whose inverse is also conformal and writes 
\begin{align}
\vartheta^{-1}(w) = \frac{\ii + \ii w}{1 - w}.
\end{align}
Define $F : \mathbb{D} \rightarrow \mathbb{C}$ as
$\forall w \in \mathbb{D}, \, F(w) = G(\vartheta^{-1}(w))$.
Since $F$ is the composition of an analytic function $G$, and of a conformal, hence holomorphic, mapping $\vartheta^{-1}$, $F$ is analytic.
Finally, let $\lambda(x,\xi) = \xi^{-\beta} \e^{- 2\pi \ii \xi x} $, then $\lambda : \mathbb{R} \times \mathbb{R}_+ \rightarrow \mathbb{C}$ is a smooth nonvanishing function.

By construction of $\vartheta$, $\lambda$ and $F$ they satisfy all the conditions enumerated in Theorem~\ref{thm:AST} and $\forall (x,\xi) \in \mathbb{R} \times \mathbb{R}_+, \, S_{\varphi_{\beta}} f(x,\xi) = \lambda(x,\xi) \times F\left(\vartheta(x + \ii \xi^{-1})\right)$ which concludes the proof.
\end{proof}

\section{The Zeros of the Analytic Stockwell Transform of White Noise}
\label{sec:zeros}

The purpose of this section is to characterize the probability distribution of the zeros of the newly introduced Analytic Stockwell Transform of white noise.
To this purpose, the link between the Analytic Stockwell Transform and the Cauchy Wavelet Transform, established in the proof of Theorem~\ref{thm:AST}, and the one-to-one correspondence between the zeros of the Cauchy Wavelet Transform of white noise and the zeros of the so-called \emph{hyperbolic Gaussian analytic function}, proven in~\cite{Koliander2019,Bardenet2021}, are combined to yield Theorem~\ref{thm:AST-GAF} below.

For the sake of completeness, before proving the connection with the hyperbolic Gaussian analytic function,
some necessary elements of hyperbolic and stochastic geometry are presented; the definition of the hyperbolic Gaussian analytic function is reminded accompanied with its main interesting properties so that the present section is self-contained.
The interested reader can refer to~\cite{Hough2009} for a thorough presentation of the theory of Gaussian analytic functions and a detailed study of their zeros.

\subsection{Gaussian analytic functions and point processes}
\label{ssec:GAF-PP}

The \emph{hyperbolic plane} is the Riemaniann surface, i.e, two-dimensional manifold, that is homeomorphic to a plane and has constant negative curvature equal to $-1$ .
Among its classical representations are the hyperboloid model, the Klein model, the Poincaré half-plane model, and the Poincaré disk model.
The present work uses the Poincaré disk model.

\begin{definition} 
\label{def:hyperbolic}
The Poincaré disk model is the triplet $(\mathbb{D}, \mathsf{d}_{\mathbb{D}}, \mathsf{m}_{\mathbb{D}})$ constituted of the open unit disk $\mathbb{D} = \lbrace z \in \mathbb{C} : \lvert z \rvert < 1\rbrace$, the \emph{hyperbolic metric} defined by
    \begin{align}
    \label{eq:hyper-metric}
        \mathrm{d}\mathsf{m}_{\mathbb{D}}(z) = \frac{4 \mathrm{d}z}{\left(1 - \lvert z \rvert^2\right)^2}
    \end{align}
    where $\mathrm{d}z$ refers to the standard Lebesgue measure on $\mathbb{C}$, and the associated \emph{hyperbolic distance} defined for $z,w \in \mathbb{D}$ by
    \begin{align}
        \mathsf{d}_{\mathbb{D}}(z,w) = 2 \mathrm{tanh}^{-1}(\mathsf{p}_{\mathbb{D}}(z,w)), \quad \mathsf{p}_{\mathbb{D}}(z,w) = \frac{\lvert z - w \rvert}{\lvert 1 - \overline{w} z \rvert} \label{eq:dph_disk}
    \end{align}
with $\mathsf{p}_{\mathbb{D}}$ the \emph{pseudo-hyperbolic} distance.
\end{definition}

The isometries under the hyperbolic distance $\mathsf{d}_{\mathbb{D}}$ of the Poincaré disk are exactly the homographies of the form
\begin{align}
    z \mapsto \frac{az + b}{\overline{b}z + \overline{a}}, \quad a,b\in \mathbb{C}, \, \lvert a \rvert^2 - \lvert b \rvert^2 =1.
\end{align}

\begin{definition}
    Let $\alpha > 0 $ be a real number, the \emph{hyperbolic Gaussian analytic function} is defined on the Poincaré disk $\mathbb{D}$ by
    \begin{align}
        \forall z \in \mathbb{D}, \quad \GAF_{\mathbb{D}}^{(\alpha)}(z) = \sum_{n = 0}^\infty \wn_n \sqrt{\frac{\Gamma(\alpha + n)}{n!}}z^n
    \end{align}
    where $\left( \wn_n \right)_{n \in \mathbb{N}}$ is a sequence of independent and identically distributed standard Gaussian random variables, $\wn_n \sim \mathcal{N}_{\mathbb{C}}(0,1)$.
    By~\cite[Lemma 2.2.3]{Hough2009}, for $\alpha >0$, $\GAF_{\mathbb{D}}^{(\alpha)}$ is well-defined and almost-surely an analytic function on $\mathbb{D}$.
\end{definition}

Gaussian analytic functions have received much interest from the probability and spatial statistics communities, in particular due to the 
remarkable properties of their zeros~\cite{Hough2009}. 
Indeed, since Gaussian analytic functions are almost-surely analytic, with probability one, their zeros constitute a random set of isolated points in $\mathbb{C}$.
Further, \cite[Lemma 2.4.1]{Hough2009} ensures that the random zeros of a Gaussian analytic function are almost-surely \emph{simple}.
Hence, the zero set of a Gaussian analytic function constitutes a \emph{simple point process}~\cite{daley2003introduction,moller2003statistical}, defined as a random variable taking values in the configurations of points in $\mathbb{C}$.
Several Gaussian analytic functions, among which the hyperbolic Gaussian analytic function, are distinguished by the fact that their zeros are distributed very uniformly in their definition domain, which is mathematically formalized as an \emph{invariance} under isometries~\cite[Section 2.3]{Hough2009}.
This invariance property is the cornerstone of the zero-based signal processing procedures designed for detection~\cite{Bardenet2020,Pascal2022}, denoising~\cite{Koliander2019,Bardenet2020} and component separation~\cite{Flandrin2015}. The interested reader can refer to~\cite{pascal2024pointprocessesspatialstatistics} for a review of the use of spatial statistics in time--frequency analysis.

\begin{theorem}\cite[Proposition 2.3.4]{Hough2009}
For $\alpha >0$, the point process formed by the zeros of $\GAF_{\mathbb{D}}^{(\alpha)}$ is invariant under the isometries of the hyperbolic disk.
\label{thm:isometries}
\end{theorem}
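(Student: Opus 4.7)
The plan is to show that for any isometry $T$ of the Poincaré disk, the random analytic function $\GAF_{\mathbb{D}}^{(\alpha)}\circ T$ coincides, up to multiplication by a deterministic nonvanishing analytic factor, with a centered complex Gaussian analytic function having the same covariance as $\GAF_{\mathbb{D}}^{(\alpha)}$. Since the law of a centered complex Gaussian analytic function is determined by its covariance kernel (together with the vanishing of the pseudo-covariance), equality in law of $\GAF_{\mathbb{D}}^{(\alpha)}$ and the rescaled pullback will follow, and hence equality in law of their zero sets. Multiplying by a nonvanishing factor does not affect zeros, so the zeros of $\GAF_{\mathbb{D}}^{(\alpha)}\circ T$ and of $\GAF_{\mathbb{D}}^{(\alpha)}$ will have the same distribution, which amounts to invariance of the zero process under $T^{-1}$ and thus, as $T$ runs over the isometry group, under all isometries.

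The first concrete step is to compute, using the independence and unit variance of the coefficients $\wn_n$, the covariance kernel
\begin{align}
K_\alpha(z,w) := \mathbb{E}\!\left[\GAF_{\mathbb{D}}^{(\alpha)}(z)\,\overline{\GAF_{\mathbb{D}}^{(\alpha)}(w)}\right] = \sum_{n=0}^\infty \frac{\Gamma(\alpha+n)}{n!}(z\overline{w})^n = \Gamma(\alpha)(1-z\overline{w})^{-\alpha},
\end{align}
via the generalized binomial series, while the pseudo-covariance $\mathbb{E}[\GAF_{\mathbb{D}}^{(\alpha)}(z)\GAF_{\mathbb{D}}^{(\alpha)}(w)]$ vanishes because $\mathbb{E}[\wn_n^2]=0$. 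The second step is the key algebraic identity: for an orientation-preserving isometry $T(z)=(az+b)/(\overline{b}z+\overline{a})$ with $|a|^2-|b|^2=1$, a direct expansion gives
\begin{align}
1-T(z)\overline{T(w)} = \frac{1-z\overline{w}}{(\overline{b}z+\overline{a})(b\overline{w}+a)}.
\end{align}

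The third step is to define the deterministic correction $c(z)=(\overline{b}z+\overline{a})^{\alpha}$, and to verify that $\overline{b}z+\overline{a}$ is nonvanishing on $\mathbb{D}$ (indeed $|\overline{b}z+\overline{a}|\geq |\overline{a}|-|\overline{b}|>0$ since $|a|\geq 1>|b|$), so a single-valued analytic branch of the $\alpha$-th power exists on $\mathbb{D}$. Then $\widetilde{G}(z):=c(z)\,\GAF_{\mathbb{D}}^{(\alpha)}(T(z))$ is a centered complex Gaussian analytic function on $\mathbb{D}$ whose covariance, by the identity above, is exactly $K_\alpha$, and whose pseudo-covariance still vanishes. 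Hence $\widetilde{G}\stackrel{d}{=}\GAF_{\mathbb{D}}^{(\alpha)}$, and because $c$ is nonvanishing on $\mathbb{D}$ the zero set of $\widetilde{G}$ equals $T^{-1}$ applied to the zero set of $\GAF_{\mathbb{D}}^{(\alpha)}$, finishing the argument for orientation-preserving isometries. The orientation-reversing case reduces to the previous one by checking that $z\mapsto \overline{\GAF_{\mathbb{D}}^{(\alpha)}(\overline{z})}$ is again a centered complex Gaussian analytic function with kernel $K_\alpha$ and vanishing pseudo-covariance.

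The main obstacle I anticipate is the careful handling of the branch of $(\overline{b}z+\overline{a})^{\alpha}$ for non-integer $\alpha>0$: one must argue that $z\mapsto\overline{b}z+\overline{a}$ maps $\mathbb{D}$ into a simply-connected region avoiding the origin so that a continuous, hence analytic, branch of the logarithm and of the $\alpha$-th power can be chosen globally on $\mathbb{D}$. Once this is granted, the rest of the argument is essentially a bookkeeping identity between covariance kernels, and the conclusion follows from the fact that the law of a centered complex Gaussian analytic function is entirely encoded by its covariance structure.
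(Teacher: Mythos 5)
The paper offers no proof of this theorem beyond the citation to Hough et al., and your argument is essentially the proof given there for Proposition 2.3.4: compute the covariance kernel $K_\alpha(z,w)=\Gamma(\alpha)(1-z\overline{w})^{-\alpha}$, use the M\"obius identity $1-T(z)\overline{T(w)}=(1-z\overline{w})/\bigl((\overline{b}z+\overline{a})(b\overline{w}+a)\bigr)$, and absorb the discrepancy into a deterministic nonvanishing analytic factor, so that equality of covariances forces equality in law and hence invariance of the zero set. The one slip is the sign of the exponent in your correction factor: the identity gives $K_\alpha(T(z),T(w))=K_\alpha(z,w)\,(\overline{b}z+\overline{a})^{\alpha}\,\overline{(\overline{b}w+\overline{a})^{\alpha}}$, so you need $c(z)=(\overline{b}z+\overline{a})^{-\alpha}$ (equivalently $(T'(z))^{\alpha/2}$, since $T'(z)=(\overline{b}z+\overline{a})^{-2}$) for $\widetilde{G}=c\cdot(\GAF_{\mathbb{D}}^{(\alpha)}\circ T)$ to have covariance exactly $K_\alpha$; with your $c(z)=(\overline{b}z+\overline{a})^{+\alpha}$ the covariance check as written does not close, although the conclusion about zeros is unaffected because both choices of $c$ are nonvanishing on $\mathbb{D}$.
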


\subsection{Link with the zeros of the hyperbolic Gaussian analytic function}

A major consequence of Theorem~\ref{thm:AST} is that, since $\vartheta$ is invertible and $\lambda$ never vanishes, the zeros of the Analytic Stockwell Transform of a finite energy signal are in one-to-one correspondance with the zeros of an analytic function $F$ on the unit disk.
As so, the zeros of the Analytic Stockwell Transform form a collection of isolated points in $\mathbb{D}$.
Further, in the presence of noise in the signal, $f$ is random and then the zeros of $S_{\varphi_{\beta}} f$ are random isolated points in $\mathbb{D}$, and thus form a point process.

In line with~\cite{Abreu2018,Bardenet2021,Pascal2022}, the purpose of this section is to study a remarkable instance of such point processes, that is  the zeros of the Analytic Stockwell Transform of \emph{white noise}, and  to demonstrate that it coincides in law with the zeros of the hyperbolic Gaussian analytic function introduced in Section~\ref{ssec:GAF-PP}.
Following~\cite{Abreu2018,Koliander2019,Bardenet2020,Bardenet2021}, this connection is established in two steps: first the continuous white noise adapted to the considered transform is rigorously constructed; second the equality, in law, of the point process of the zeros of the transform of white noise and of the zeros of the associated Gaussian analytic function is proven.
The following lemma will appear as an important building block for the design of the continuous white noise adapted to the Analytic Stockwell Transform.

\begin{lemma}
\label{lem:def_fn}
    Let $\beta > 0$ and consider the family of functions $\left\lbrace f_n \right\rbrace_{n \in \mathbb{N}}$ defined through their Fourier transforms such that, $\forall n \in \mathbb{N}$,
\begin{align}
\label{eq:def_fn}
    \forall \nu \in \mathbb{R}_+, \quad \widehat{f}_n(\nu) = \sqrt{\frac{n!}{\Gamma(2\beta + n +1)}} \, 2\pi\, \e^{- 2\pi \nu} (2\pi)^{2\beta} \nu^\beta L_n^{(2\beta)}(4\pi\nu),
\end{align}
where $L_n^{(2\beta)}$ denotes the \emph{Laguerre polynomial} of order $n$ with parameter $2\beta$, defined by the Rodrigues formula
    \begin{align}
    \label{eq:rodrigues}
        L_n^{(2\beta)}(\mu) =  \frac{\e^\mu \mu^{-2\beta}}{n!} \dfrac{\mathrm{d}^n}{\mathrm{d}\mu^n}\left[  \e^{-\mu} \mu^{2\beta + n}\right](\mu).
    \end{align}
There exists a smooth nonvanishing function $\eta_{\beta} : \mathbb{R} \times \mathbb{R}_+$ such that, for all $n \in \mathbb{N}$, $z=x+\ii \xi^{-1} \in \mathbb{C}$, the Analytic Stockwell Transform of $f_n$ has the closed-form expression:
\begin{align}
\begin{split}
\label{eq:Sbeta-fn}
    S_{\varphi_{\beta}} f_n(x,\xi) = \eta_\beta(x,\xi) \sqrt{\frac{\Gamma(2\beta + n + 1)}{  n!}} \, \vartheta(z)^n.
     \end{split}
\end{align}
\end{lemma}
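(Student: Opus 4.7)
The plan is to reduce the computation to the Cauchy Wavelet Transform via Theorem~\ref{thm:AST} and then exploit a classical Laplace-type integral identity for Laguerre polynomials. By equation~\eqref{eq:AST-to-CWT} in the proof of Theorem~\ref{thm:AST}, for $y = \xi^{-1}$ one has $S_{\varphi_\beta} f_n(x,\xi) = \sqrt{\xi}\, \e^{-2\pi \ii \xi x}\, W_{\psi_\beta} f_n(x,y)$, so the task boils down to evaluating the Cauchy Wavelet Transform $W_{\psi_\beta} f_n(x,y)$ in closed form.

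I would next write the wavelet transform on the Fourier side. Using Plancherel together with the fact that the Fourier transform of $t \mapsto y^{-1/2}\psi_\beta((t-x)/y)$ is proportional to $\sqrt{y}\, \e^{-2\pi \ii \nu x}\widehat{\psi}_\beta(y\nu)$ and that $\widehat{\psi}_\beta(y\nu) = (y\nu)^\beta \e^{-2\pi y \nu}$ on $\mathbb{R}_+$, one obtains, up to a constant independent of $n$, $x$, $\xi$,
\begin{align*}
W_{\psi_\beta} f_n(x,y) \propto \sqrt{y}\, y^\beta \int_0^\infty \widehat{f}_n(\nu)\, \nu^\beta \e^{-2\pi y \nu}\, \e^{2\pi \ii \nu x}\, \dd \nu.
\end{align*}
Plugging in the expression~\eqref{eq:def_fn} for $\widehat{f}_n$ and performing the change of variables $\mu = 4\pi\nu$ turns the integral into
\begin{align*}
\int_0^\infty \mu^{2\beta}\, L_n^{(2\beta)}(\mu)\, \e^{-s \mu}\, \dd \mu, \qquad s = \frac{1 + y - \ii x}{2},
\end{align*}
with $\mathrm{Re}(s) > 0$ since $y > 0$. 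I would then invoke the classical Laplace transform identity for Laguerre polynomials,
\begin{align*}
\int_0^\infty \mu^{\alpha}\, L_n^{(\alpha)}(\mu)\, \e^{-s\mu}\, \dd\mu = \frac{\Gamma(\alpha + n + 1)}{n!}\, \frac{(s-1)^n}{s^{\alpha + n + 1}},
\end{align*}
applied with $\alpha = 2\beta$.

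The next step, which is the conceptual crux, is to recognize the Cayley transform inside the ratio $(s-1)/s$. With $z = x + \ii y$ a direct computation gives $2s = -\ii(z + \ii)$ and $2(s - 1) = -\ii(z - \ii)$, hence $(s-1)/s = (z-\ii)/(z+\ii) = \vartheta(z)$. After this identification, separating what depends on $n$ from what does not yields
\begin{align*}
W_{\psi_\beta} f_n(x,y) = \widetilde{\eta}_\beta(x,y)\, \sqrt{\frac{\Gamma(2\beta + n + 1)}{n!}}\, \vartheta(z)^n,
\end{align*}
where $\widetilde\eta_\beta$ gathers all the $n$-independent prefactors, notably a factor $(z+\ii)^{-(2\beta+1)}$ coming from $s^{-(2\beta+n+1)}$ once $(z+\ii)^n$ has been absorbed into $\vartheta(z)^n$. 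Finally, setting $\eta_\beta(x,\xi) = \sqrt{\xi}\,\e^{-2\pi \ii \xi x}\,\widetilde{\eta}_\beta(x,\xi^{-1})$ delivers~\eqref{eq:Sbeta-fn}; smoothness and non-vanishing of $\eta_\beta$ on $\mathbb{R} \times \mathbb{R}_+$ follow from $\xi > 0$ and from $z + \ii \neq 0$ on the upper half-plane.

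The main obstacle is bookkeeping: ensuring that the Laguerre-Laplace identity produces exactly the square-root factor $\sqrt{\Gamma(2\beta+n+1)/n!}$ once combined with the normalizing constant $\sqrt{n!/\Gamma(2\beta+n+1)}$ in the definition of $\widehat{f}_n$, and verifying that every surviving factor is independent of $n$ and smooth and nonvanishing in $(x,\xi)$. The algebraic identification $(s-1)/s = \vartheta(z)$ is the clean geometric fact that makes the statement land on the Cayley transform rather than on some less structured homography.
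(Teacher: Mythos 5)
Your proposal is correct and follows essentially the same route as the paper: reduction to the Cauchy Wavelet Transform via the factorization $S_{\varphi_\beta}f_n(x,\xi)=\sqrt{\xi}\,\e^{-2\pi\ii\xi x}W_{\psi_\beta}f_n(x,\xi^{-1})$, passage to the Fourier side, a change of variables turning the integral into a Laplace transform of $\mu^{2\beta}L_n^{(2\beta)}(\mu)$, and the identification $(s-1)/s=\vartheta(z)$. The only difference is that you invoke the classical Laguerre--Laplace identity as a black box, whereas the paper re-derives it from the Rodrigues formula together with the Laplace transform of $n$-th derivatives precisely in order to pin down the prefactors (the stated motivation of the accompanying remark), explicitly checking that the boundary terms $g^{(k)}(0^+)$ vanish thanks to $\beta>0$ --- a condition your cited identity subsumes as $\mathrm{Re}(\alpha)>-1$.
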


\begin{remark}
The above lemma is an original contribution of the present work.
Indeed, first, the orthonormal basis used to connect the Cauchy Wavelet Transform of white noise to the hyperbolic Gaussian analytic function in~\cite{Abreu2018,Koliander2019,Bardenet2021} has to be adapted to the Analytic Stockwell Transform of Section~\ref{ssec:AST}.
Second, in~\cite{Abreu2018,Koliander2019} the orthonormal basis used is only defined implicitly.
Third, it seems that the explicit derivations performed in~\cite[Theorem~2.3]{Bardenet2021}, relying on~\cite[Equation~(2.15)]{Bardenet2021}, together with the computations in~\cite[Section 4]{daniel2008remarks}, contains erroneous prefactors that might impact the definition of white noise adapted to the Cauchy Wavelet Transform used to prove~\cite[Theorem~2.3]{Bardenet2021}. 
Hence, since Equation~\eqref{eq:Sbeta-fn} is the cornerstone of the demonstration of the connection between the hyperbolic Gaussian analytic function and the Analytic Stockwell Transform of white noise, for the sake of completeness, the proof below details the entire calculation, including the proposed corrected prefactors and normalizations.
\end{remark}

\begin{proof}
First, remark that for $f \in \Ell(\mathbb{R})$, the analytic factor of the Stockwell transform, corresponding to
    \begin{align*}
        \xi^{\beta + 1/2}W_{\psi_\beta} f (x, \xi^{-1}) &= \xi^{\beta+1} \int_{-\infty}^\infty f(t) \overline{\psi_\beta\left(\frac{t-x}{\xi^{-1}} \right)} \, \dd t,
    \end{align*}
    involves the convolution between $f$ and $h$ defined by $h(t) = \overline{\psi_{\beta}(- t /\xi^{-1})}$.
    Then, using that the Fourier transform of the convolution between two functions is the product of the Fourier transforms of these functions leads to
    \begin{align}
    \label{eq:Wbeta}
        \xi^{\beta + 1/2} W_{\psi_\beta} f (x, \xi^{-1}) &=  \xi^{\beta+1 } \int_{-\infty}^\infty \widehat{f}(\nu) \widehat{h}(\nu) \e^{2\pi \ii \nu x}\, \dd \nu.
    \end{align}
    Using the symmetries of the Fourier transform and the expression of the Fourier transform of the Cauchy wavelet provided in Equation~\eqref{eq:def-cauchy} yields
    \begin{align*}
        \widehat{h}(\nu) = \xi^{-1} \overline{\widehat{\psi}_{\beta}(\xi^{-1}\nu)} = 
        \left\lbrace\begin{array}{cl}
        \xi^{-1 - \beta} \nu^{\beta}  \e^{-2\pi \xi^{-1}\nu} & \text{if } \nu > 0 \\
        0 & \text{otherwise}
        \end{array}
        \right.
    \end{align*}
    which, when injected into Equation~\eqref{eq:Wbeta}, leads to
    \begin{align}
    \begin{split}
        \xi^{\beta + 1/2} W_{\psi_\beta} f (x, \xi^{-1}) &=  \xi^{\beta+1} \int_{-\infty}^\infty \widehat{f}(\nu) \widehat{h}(\nu) \e^{2\pi \ii \nu x}\, \dd t\\
        & = \int_{0}^\infty \widehat{f}(\nu) \nu^{\beta} \e^{-2\pi \xi^{-1}\nu}  \e^{2\pi \ii \nu x} \, \dd \nu\\
        & = \int_{0}^\infty \widehat{f}(\nu) \nu^{\beta}   \e^{2\pi \ii \nu (x+\ii\xi^{-1})} \, \dd \nu. \label{eq:Wpsif}
        \end{split}
    \end{align}

Then, injecting the expression of the Fourier transform of $f_n$ provided in Equation~\eqref{eq:def_fn} into the expression of the analytic wavelet transform of Equation~\eqref{eq:Wpsif} and setting $z = (x+\ii\xi^{-1})$, one obtains 
\begin{align}
\xi^{\beta + 1/2} W_{\psi_\beta} f_n (x, \xi^{-1}) 
    = \hspace{3.55mm}&  \int_{0}^\infty \widehat{f}_n(\nu) \nu^{\beta}   \e^{2\pi \ii \nu z} \, \dd \nu\\ 
     = \hspace{3.5mm} & \sqrt{\frac{n!}{\Gamma(\beta + n +1)}} \,  2\pi  \! \! \int_{0}^\infty \!\! \e^{2\pi \nu(\ii z -1)} (2\pi \nu)^{2\beta}  L_n^{(2\beta)}(4\pi\nu) \, \dd \nu \nonumber\\
     \overset{\mu \, = \, 2\pi\nu}{=} & \sqrt{\frac{n!}{\Gamma(\beta + n +1)}} \, \int_{0}^\infty \e^{ \mu(\ii z -1)} \mu^{2\beta}  L_n^{(2\beta)}(\mu) \, \dd \mu. \nonumber
     \label{eq:Wbeta-fn}
\end{align}
Replacing the expression of the Laguerre polynomial by its Rodrigues formula of Equation~\eqref{eq:rodrigues} one gets
\begin{align}
    \begin{split}
        \int_{0}^\infty  \e^{ \mu(\ii z -1)} \mu^{2\beta}  &L_n^{(2\beta)}(2\mu) \, \dd \mu \\
        = \hspace{3mm}& \int_{0}^\infty \e^{ \mu(\ii z -1)} \mu^{2\beta}  \frac{\e^{2\mu} (2\mu)^{-2\beta}}{n!} \dfrac{\mathrm{d}^n}{\mathrm{d}\mu^n}\left[  \e^{-\mu} \mu^{2\beta + n}\right](2\mu) \, \dd \mu \\
        = \hspace{3mm} & \frac{1}{ 2^{2\beta} n!} \int_{0}^\infty \e^{ \mu(\ii z +1)}    \dfrac{\mathrm{d}^n}{\mathrm{d}\mu^n}\left[  \e^{-\mu} \mu^{2\beta + n}\right](2\mu) \, \dd \mu\\
        \overset{\nu \, =  \, 2 \mu}{=} & \frac{1}{ 2^{2\beta+1} n!} \int_{0}^\infty \e^{ \nu\frac{\ii z +1}{2}}    \dfrac{\mathrm{d}^n}{\mathrm{d}\nu^n}\left[  \e^{-\nu} \nu^{2\beta + n}\right](\nu) \, \dd \nu\\
    \end{split}
\end{align}
which is the Laplace transform\footnote{Let $g : \mathbb{R}\rightarrow \mathbb{C}$. The Laplace transform of $g$, denoted $\mathcal{L}\{g\} : \mu \subseteq \mathbb{C} \rightarrow \mathbb{C}$, is a function of the complex variable $s$ defined by:  $\displaystyle
    \mathcal{L}\lbrace g\rbrace(s) = \int_{\mathbb{R}} \e^{-\nu s}g(\nu) \, \mathrm{d}\nu$, $s \in \mu$.} of the $n$th order derivative of $g_n(\nu) = \e^{-\nu} \nu^{2\beta+n}$ evaluated at $-(\ii z + 1)/2$, i.e.,
\begin{align}
\label{eq:integral_L2beta}
    \int_{0}^\infty  \e^{ \mu(\ii z -1)} \mu^{2\beta}  L_n^{(2\beta)}(2\mu) \, \dd \mu = \frac{1}{ 2^{2\beta+1} n!} \mathcal{L}\left\lbrace g_n^{(n)}\right\rbrace\left( -\frac{\ii z+1}{2}\right).
\end{align}
To get a closed-form expression of $\mathcal{L}\left\lbrace g_n^{(n)}\right\rbrace$, the following lemma is needed.

\begin{lemma}
    Let $n \in \mathbb{N}$, and $g :\mathbb{R} \rightarrow \mathbb{C}$ be $n$ times differentiable.
    Assume that the $n$th derivative of $g$, denoted $g^{(n)}$, is of exponential type\footnote{A function $g : \mathbb{R} \rightarrow \mathbb{C}$ is said to be of \emph{exponential type} if there exist $M > 0$ and $\tau  >0$ such that $\lvert g(t) \rvert \leq M \e^{t/\tau}$ in the limit $t \rightarrow \infty$}.
    Then, the Laplace transform of $g^{(n)}$ writes
    \begin{align}
\label{eq:prop_Laplace_n}
    \mathcal{L}\left\lbrace g^{(n)}\right\rbrace(s) = s^n \mathcal{L}[g](s) - \sum_{k = 0}^{n-1} s^{n- k -1} g^{(k)}(0^+)
\end{align}
where $g^{(k)}(0^+)$ denotes the limit of $g^{(k)}(t)$ as $t \rightarrow 0$, $t > 0$. 
\label{lem:Laplace-gn}
\end{lemma}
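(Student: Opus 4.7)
The plan is to prove the formula by induction on $n$, with the base case $n=1$ handled by a single integration by parts and the inductive step obtained by applying the base case to $g^{(n-1)}$.

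For the base case, I would write
\begin{align*}
\mathcal{L}\{g^{(1)}\}(s) = \int_0^\infty \e^{-\nu s} g'(\nu)\, \dd\nu
\end{align*}
and integrate by parts with $u = \e^{-\nu s}$, $\dd v = g'(\nu)\dd\nu$. This gives a boundary contribution $[\e^{-\nu s} g(\nu)]_0^\infty$ plus $s \mathcal{L}\{g\}(s)$. The boundary term at $\nu = 0$ yields $-g(0^+)$, and the term at infinity vanishes once we restrict to $\Re(s) > 1/\tau$ by invoking the exponential-type bound on $g$ (which follows from the hypothesis on $g^{(n)}$, see below). This produces the stated formula for $n=1$.

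For the inductive step, assuming the formula holds for $n-1$, I would write $g^{(n)} = (g^{(n-1)})'$ and apply the base case to $g^{(n-1)}$ to obtain $\mathcal{L}\{g^{(n)}\}(s) = s\,\mathcal{L}\{g^{(n-1)}\}(s) - g^{(n-1)}(0^+)$. Substituting the induction hypothesis on $\mathcal{L}\{g^{(n-1)}\}$, expanding, and reindexing the resulting finite sum telescopes into the claimed expression $s^n \mathcal{L}\{g\}(s) - \sum_{k=0}^{n-1} s^{n-k-1} g^{(k)}(0^+)$.

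The only non-routine point, which I would address in a short preliminary remark, is that the hypothesis is stated solely on $g^{(n)}$, whereas the induction requires exponential-type control on every lower-order derivative (to justify existence of the Laplace transforms involved and the vanishing of the boundary terms at infinity). This is handled by writing $g^{(k-1)}(t) = g^{(k-1)}(0) + \int_0^t g^{(k)}(\sigma)\,\dd\sigma$ and iterating from $k=n$ down to $k=1$: each integration preserves the exponential-type property, possibly enlarging the multiplicative constant $M$ but with the same rate $1/\tau$. The rest of the argument is then a straightforward manipulation of finite sums, and I do not expect any real obstacle beyond carefully tracking the shifts of index $k$ in the telescoping.
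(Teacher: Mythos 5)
Your proposal is correct and follows essentially the same route as the paper, whose entire proof is the one-line remark that the identity ``derives from a recursion on the differentiation order $n$''; your induction with integration by parts is precisely that recursion, spelled out. The preliminary observation that exponential-type control propagates from $g^{(n)}$ down to the lower-order derivatives by integration is a sound and welcome addition that the paper leaves implicit.
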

\begin{proof}
The proof derives from a recursion on the differentiation order $n$.
\end{proof}

Hence, the first step to compute the Laplace transform of $g_n^{(n)}$ is to derive an expression of the Laplace transform of $g_n$.
For all $s \in \mathbb{C}$ satisfying $\mathrm{Re}(1+s) > 0$, one has
\begin{align}
\begin{split}
\label{eq:Laplace_gn}
    \mathcal{L}\{g_n\}(s) = \int_0^\infty \e^{-s \nu} \e^{-\nu} \nu^{2\beta+n} \, \mathrm{d}\nu 
    & = \frac{\Gamma(2\beta + n + 1)}{(1+s)^{2\beta + n + 1}}
    \end{split}
\end{align}
leveraging the change of variable $\nu' = \nu (1+s)$ and the definition of the gamma function. The second step is to compute the limits $g^{(k)}(0^+)$ for $k \in \lbrace 0, \hdots, n-1 \rbrace$.
To that aim, a recursion shows that for all $k \in  \lbrace 0, 1, \hdots, n \rbrace$, there exists a polynomial $p_k$ of order $n$, such that
\begin{align}
\label{eq:gk-pk}
    g^{(k)}(\nu) = \e^{-\nu}\nu^{2\beta} p_k(\nu)
\end{align}
and thus, since $\beta > 0$, $\forall k \in  \lbrace 0, 1, \hdots, n-1 \rbrace, \, g^{(k)}(0^+) = 0$. 
Further, for $k = n$, Equation~\eqref{eq:gk-pk} shows that $g_n^{(n)}$ is of exponential type.
Altogether, Lemma~\ref{lem:Laplace-gn} applies and the combination of Equations~\eqref{eq:prop_Laplace_n} and \eqref{eq:Laplace_gn} enables to obtain a closed-form expression of the integral in Equation~\eqref{eq:integral_L2beta}
\begin{align}
\label{eq:wave-fn}
    \int_{0}^\infty  \e^{ \mu(\ii z -1)} \mu^{2\beta}  L_n^{(2\beta)}(2\mu) \, \dd \mu  \nonumber
    &= \frac{1}{ 2^{2\beta+1} n!} \left( -\frac{\ii z+1}{2}\right)^n \frac{\Gamma(2\beta + n + 1)}{(1- \frac{\ii z + 1}{2})^{2\beta + n + 1}}\\
    & = \frac{1}{  n!} \left( -\ii z - 1\right)^n \frac{\Gamma(2\beta + n + 1)}{(1- \ii z )^{2\beta + n + 1}}\\
    & = \frac{1}{  n!} \left( \frac{-\ii z-1}{1- \ii z }\right)^n \frac{\Gamma(2\beta + n + 1)}{(1- \ii z )^{2\beta  + 1}} \nonumber
    \\
    & = \left(\frac{\ii}{z + \ii}\right)^{2\beta  + 1} \frac{\Gamma(2\beta + n + 1)}{  n!} \left( \frac{z-\ii}{ z  + \ii}\right)^n.\nonumber
\end{align}
Equations~\eqref{eq:Wbeta-fn} and~\eqref{eq:wave-fn} show that
\begin{align}
    \xi^{\beta + 1/2} W_{\psi_\beta} f_n (x, \xi^{-1}) = \left(\frac{\ii}{z + \ii}\right)^{2\beta  + 1}   \sqrt{\frac{\Gamma(2\beta + n + 1)}{  n!}} \, \vartheta(z)^n
\end{align}
where $\vartheta$ is the Cayley transform defined in Equation~\eqref{eq:Cayley}.
Finally, remembering that $z = x + \ii \xi^{-1}$, let $\eta_{\beta} : \mathbb{R}\times \mathbb{R}_+ \rightarrow \mathbb{C}$ be defined by
\begin{align}
   \forall (x,\xi) \in \mathbb{R}\times \mathbb{R}_+, \quad \eta_{\beta}(x, \xi) = \sqrt{\xi} \e^{- 2\pi \ii \xi x} \left(\frac{\ii}{x + \ii \xi^{-1} + \ii}\right)^{2\beta  + 1}.
\end{align}
Then, for any $\beta > 0$, $\eta_{\beta}$ is a well-defined smooth nonvanishing function on the upper half-plane, and the Analytic Stockwell Transform of $f_n$ writes
\begin{align}
\begin{split}
    S_{\varphi_{\beta}} f_n(x,\xi) 
     &= \sqrt{\xi} \e^{- 2\pi \ii \xi x} W_{\psi_\beta} f_n(x, \xi^{-1}) \\&= \eta_\beta(x,\xi) \sqrt{\frac{\Gamma(2\beta + n + 1)}{  n!}} \, \vartheta(z)^n.
     \end{split}
\end{align}
\end{proof}

\begin{theorem}
\label{thm:AST-GAF}
    Let $\beta > 0$, $\psi_{\beta} \in L^2(\mathbb{R})$ be the Cauchy wavelet of parameter $\beta$ and $\varphi_{\beta}(t) = \psi_{\beta}(t) \e^{-2\pi \ii t}$.
    Then,  up to a conformal transform, the zeros of the Analytic Stockwell Transform with analysis window $\varphi_{\beta}$ of white noise coincide in law with the zero set of the hyperbolic Gaussian analytic function of parameter $2\beta + 1$.
    Consequently, the zero set of the Analytic Stockwell Transform of white noise is invariant under isometries of $\mathbb{D}$.
\end{theorem}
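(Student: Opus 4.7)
The plan is to carry out the two-step strategy announced just before Lemma~\ref{lem:def_fn}: first, construct a continuous white noise adapted to the Analytic Stockwell Transform using the family $\{f_n\}$; second, apply $S_{\varphi_\beta}$ term by term and identify the resulting random series, after the conformal change of variable $\vartheta$, with the hyperbolic Gaussian analytic function $\GAF_{\mathbb{D}}^{(2\beta+1)}$.

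For the first step, I would start from the classical fact that the Laguerre polynomials $\{L_n^{(2\beta)}\}_{n \in \mathbb{N}}$ are orthogonal on $\mathbb{R}_+$ with respect to the weight $\mu \mapsto \e^{-\mu}\mu^{2\beta}$, with squared norm $\Gamma(2\beta + n + 1)/n!$. Since the functions $\widehat{f}_n$ of Lemma~\ref{lem:def_fn} are designed to reproduce exactly this weight after the change of variable $\mu = 4\pi\nu$, a direct computation shows that $\{f_n\}$ is, up to a global multiplicative constant, an orthogonal family in $\Ell(\mathbb{R})$; the density of Laguerre polynomials in the corresponding weighted $L^2$ space then implies that $\{f_n\}$, once renormalized, is a complete orthonormal system in the Hardy-type subspace $\mathcal{H}_+ = \{f \in \Ell(\mathbb{R}) : \mathrm{supp}(\widehat{f}) \subseteq \mathbb{R}_+\}$. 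Since the Cauchy wavelet $\psi_\beta$ has Fourier support in $\mathbb{R}_+$, only the component of a signal lying in $\mathcal{H}_+$ contributes to $S_{\varphi_\beta}$, so it suffices to define the analytic white noise as the formal series $W = \sum_{n \in \mathbb{N}} \wn_n f_n$ with $(\wn_n)_{n \in \mathbb{N}}$ i.i.d. $\mathcal{N}_{\mathbb{C}}(0,1)$, interpreted almost-surely in a suitable space of tempered distributions.

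For the second step, applying $S_{\varphi_\beta}$ formally to $W$ and invoking Lemma~\ref{lem:def_fn} term by term yields
\[
S_{\varphi_\beta} W(x,\xi) = \sum_{n \in \mathbb{N}} \wn_n \, S_{\varphi_\beta} f_n(x,\xi) = \eta_\beta(x,\xi) \sum_{n \in \mathbb{N}} \wn_n \sqrt{\frac{\Gamma(2\beta+n+1)}{n!}}\, \vartheta(z)^n,
\]
with $z = x + \ii \xi^{-1}$. By definition, the right-hand series is exactly $\GAF_{\mathbb{D}}^{(2\beta+1)}(\vartheta(z))$, and since $\lvert \vartheta(z)\rvert < 1$ for $z \in \mathbb{C}_+$, this series converges locally uniformly almost surely on $\mathbb{R}\times\mathbb{R}_+$, which legitimizes the interchange of sum and transform. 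Because $\eta_\beta$ never vanishes and $\vartheta : \mathbb{C}_+ \to \mathbb{D}$ is a conformal bijection, the map $(x,\xi) \mapsto \vartheta(x + \ii \xi^{-1})$ carries the zero set of $S_{\varphi_\beta} W$ bijectively onto the zero set of $\GAF_{\mathbb{D}}^{(2\beta+1)}$, proving the claimed coincidence in law. Invariance under isometries of the Poincaré disk is then immediate from Theorem~\ref{thm:isometries}.

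The main obstacle I anticipate lies entirely in the first step, namely in making the construction of analytic white noise fully rigorous. The formal series $\sum \wn_n f_n$ does not converge in $\Ell(\mathbb{R})$, so $W$ must be understood as a random tempered distribution, and one must justify that the termwise-computed series above is genuinely the Analytic Stockwell Transform of $W$. The cleanest route, mirroring the approach of the cited works, is to \emph{define} the Analytic Stockwell Transform of white noise as the locally uniformly convergent random series $\eta_\beta \cdot (\GAF_{\mathbb{D}}^{(2\beta+1)} \circ \vartheta)$ and then check that it arises as the limit in law, in the space of analytic functions on $\mathbb{C}_+$ endowed with the topology of uniform convergence on compacta, of the transforms $S_{\varphi_\beta}\bigl(\sum_{n \leq N} \wn_n f_n\bigr)$ as $N \to \infty$. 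The computational core of the argument is already packaged in Lemma~\ref{lem:def_fn}; what remains is bookkeeping of the normalization constants (as flagged in the remark following Lemma~\ref{lem:def_fn}) and standard measure-theoretic care to pass from convergence in distribution of analytic functions to equality in law of their zero processes.
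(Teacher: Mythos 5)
Your proposal is correct and follows essentially the same route as the paper: build the adapted white noise from the family $\{f_n\}$ of Lemma~\ref{lem:def_fn}, apply $S_{\varphi_\beta}$ termwise to identify the result with $\eta_\beta\cdot(\GAF_{\mathbb{D}}^{(2\beta+1)}\circ\vartheta)$, and conclude via the nonvanishing of $\eta_\beta$, the bijectivity of $\vartheta$, and Theorem~\ref{thm:isometries}. The only difference is the packaging of the white-noise construction — you propose tempered distributions and a limit in law of partial sums, whereas the paper works in the completion $\Theta$ of $\Ell(\mathbb{R})$ for the weighted norm $\lVert f\rVert_\Theta^2=\sum_n(1+n^2)^{-1}\lvert\langle f,f_n\rangle\rvert^2$ and invokes the extension theorem of Bardenet--Hardy to make the termwise computation rigorous — which is precisely the bookkeeping you correctly flag as the remaining work.
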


\begin{remark}
Establishing a connection between time--frequency transforms and Gaussian analytic functions with isometry-invariant zero sets requires to consider the \emph{complex-valued} white noise~\cite[Section 4]{Bardenet2020}, \cite{Bardenet2021,Abreu2018,Koliander2019,Pascal2022}.
Thus, all along this work, ``white noise'' will refer to a complex random variable.\footnote{The interested reader can refer to~\cite[Section 3]{Bardenet2020} for a study of the far more intricate case of real white noise, which will not be discussed in the present work.}
\end{remark}

\begin{proof}
This proof follows closely the reasoning developed in~\cite[Theorem 4.3]{Bardenet2021}, linking the Cauchy Wavelet Transform of white noise to the hyperbolic Gaussian analytic function.
It is sketched here for the sake of completeness.

The first step consists in defining rigorously the continuous white noise adapted to the considered transform.
To that aim, two main paths have been proposed, respectively described in~\cite{Koliander2019,Bardenet2020} and in~\cite{Bardenet2021}, the later being used in the following.
Considering $\mathcal{H} = \Ell(\mathbb{R})$ and the orthonormal basis $\lbrace f_n \rbrace_{n\in \mathbb{N}}$ introduced in Lemma~\ref{lem:def_fn}, define a novel norm $\lVert \cdot \rVert_{\Theta}$ such that
\begin{align}
    \forall f \in \mathcal{H}, \quad \lVert f \rVert_{\Theta}^2 = \sum_{n \in \mathbb{N}} \frac{1}{1+n^2} \lvert \langle f, f_n \rangle\rvert^2.
\end{align}
Then, the completion $\Theta$ of $\Ell(\mathbb{R})$ for the norm $\lVert \cdot \rVert_{\Theta}$ is such that the series
\begin{align}
\label{eq:def_wn}
    \wn = \sum_{n \in \mathbb{N}} \wn_n f_n, \quad \wn_n \sim \mathcal{N}_{\mathbb{C}}(0,1)
\end{align}
with $\left( \wn_n \right)_{n \in \mathbb{N}}$ a sequence of independent and identically distributed standard Gaussian random variables has a well-defined limit.
Furthermore, by orthonormality of the basis $\lbrace f_n \rbrace_{n\in \mathbb{N}}$ \cite[Proposition 3.3]{Bardenet2021} ensures that Equation~\eqref{eq:def_wn} defines a Gaussian random variable with the characteristic function of white noise.
Note that, as emphasized in~\cite[Section 3]{Bardenet2021} $\Theta$ depends on the orthonormal basis $\lbrace f_n\rbrace$, hence the importance of Lemma~\ref{lem:def_fn} deriving a basis suited to the Analytic Stockwell Transform.

The second step is to show that the Analytic Stockwell Transforms extends to $\Theta$, which then ensures that the transform of the continuous white noise is well-defined. 
To that aim, for $\beta > 0$ fixed, let $\lbrace \Psi_n \rbrace_{n \mathbb{N}}$ be the sequence of analytic functions on $\mathbb{D}$ defined by
\begin{align}
    \Psi_n(w) = \sqrt{\frac{\Gamma(2\beta + n + 1)}{n!}}w^n.
\end{align}
For $\mathfrak{K}\subset \mathbb{D}$ a compact, let $0 < R < 1$ be such that $\mathfrak{K} \subset \overline{\mathsf{B}}(0,R)$, then
\begin{align}
    \begin{split}
    \sum_{n\in \mathbb{N}} (1+n^2) \lvert \Psi_n(w)\rvert^2 &= \sum_{n\in \mathbb{N}} (1+n^2) \frac{\Gamma(2\beta + n + 1)}{n!} \lvert w\rvert^{2n} \\
    &\leq \sum_{n\in \mathbb{N}} (1+n^2) \frac{\Gamma(2\beta + n + 1)}{n!} R^{2n}\\
    & < \infty \quad \text{since } \lvert R  \rvert < 1.
    \end{split}
\end{align}
Hence, the ($C_{\mathfrak{K}}$ revisited) condition of \cite[Theorem 3.4]{Bardenet2021} is satisfied and
\begin{align}
\label{eq:def_L}
\mathcal{T} f(w) = \sum_{n \in \mathbb{N}} \langle f, f_n \rangle \Psi_n(w)
\end{align}
yields a well-defined and continuous transform $\mathcal{T} : \Ell(\mathbb{R}) \rightarrow \mathcal{A}(\mathbb{D})$.
Then, Lemma~\ref{lem:def_fn} shows that for any $n \in \mathbb{N}$,
\begin{align}
\begin{split}
    \forall (x,\xi) \in \mathbb{R}\times \mathbb{R}_+, \quad S_{\varphi_\beta}f_n(x, \xi) &= \eta_{\beta}(x,\xi) \Psi_n(\vartheta(z)) \\
    &= 
    \eta_{\beta}(x,\xi)\mathcal{T}f_n(\vartheta(z)), \quad z = x + \ii \xi^{-1}
    \end{split}
\end{align}
which extends to any $f \in \Ell(\mathbb{R})$ by continuity of $\mathcal{T}$.
Applying \cite[Theorem 3.4]{Bardenet2021} proves that $\mathcal{T}$ extends continuously to $\Theta$. 
Consequently, $\mathcal{T}\wn$ is well-defined and it follows from the fact that $\eta_{\beta}$ is nonvanishing that the relation
\begin{align}
\label{eq:def_Sbeta-wn}
    \forall (x,\xi) \in \mathbb{R}\times \mathbb{R}_+, \quad S_{\varphi_\beta}\wn(x, \xi)  = \eta_{\beta}(x,\xi) \mathcal{T}\wn(\vartheta(z)), \quad z = x + \ii \xi^{-1}
\end{align}
yields a rigorous definition to the Analytic Stockwell transform of white noise.
Finally, by the definition of $\wn$ and of $\mathcal{T}$ provided in Equations~\eqref{eq:def_wn} and~\eqref{eq:def_L},
\begin{align*}
   \forall w \in \mathbb{D}, \, \, \,  \mathcal{T}\wn(w) &= \sum_{n \in \mathbb{N}} \wn_n \Psi_n(w) =\sum_{n \in \mathbb{N}} \wn_n \sqrt{\frac{\Gamma(2\beta + n + 1)}{n!}}w^n 
   = \GAF^{(2\beta+1)}(w)
\end{align*}
from which one deduces that for all $(x,\xi) \in \mathbb{R}\times \mathbb{R}_+$, and for  $z = x + \ii \xi^{-1}$
\begin{align}
\label{eq:Sbeta-GAF}
S_{\varphi_\beta}\wn(x, \xi)  = \eta_{\beta}(x,\xi) \GAF^{(2\beta+1)}(\vartheta(z)).
\end{align}
Since $\eta_{\beta}$ is nonvanishing and $\vartheta$ is bijective, Equation~\eqref{eq:Sbeta-GAF} ensures that the zeros of the Analytic Stockwell Transform of white noise are in one-to-one correspondance with the zeros of the hyperbolic Gaussian analytic function of parameter $2\beta +1$.
\end{proof}

\section{Numerical Experiments and Spatial Statistics of Zeros}
\label{sec:numeric}

The purpose of the present section is twofold.
First, illustrations of the zero set of the Analytic Stockwell Transform of white noise are provided, enabling the reader to observe qualitatively the \emph{hyperbolic} uniform spread of zeros in the Poincaré disk.
Second, the characterization of the distribution of zeros, established in Theorem~\ref{thm:AST-GAF}, is supported by quantitative numerical evidence obtained by thorough comparisons of the theoretical and empirical \emph{spatial statistics} of the point process of zeros.
A documented Python toolbox implementing the Analytic Stockwell Transform and performing the spatial statistics analysis of the zeros of the transform, thus enabling to reproduce all the experiments, plots and figures presented in Section~\ref{sec:numeric} has been made publicly available by the authors.\footnote{\label{ft:toolbox}\url{https://github.com/courbot/ast}}

\subsection{Discrete Time--Frequency Analysis}

In Section~\ref{sec:GST}, the Analytic Stockwell Transform has been defined for \emph{continuous} signals $f \in \Ell(\mathbb{R})$.
Though, in practice, one only measures a finite number $N \in \mathbb{N}^*$ of values of the signal of interest in a bounded time window
\begin{align}
\label{eq:xn}
   \forall n \in \lbrace 1,\hdots, N\rbrace, \quad  y_n = f\left(x_n \right), \quad x_n = x_{\min} + n \Delta_{x},
\end{align}
corresponding to \emph{sampling} $f$ in  $[x_{\min}, x_{\max}]$ at equally spaced time $x_1, \hdots, x_N$ with resolution $\Delta_{x} = (x_{\max} - x_{\min})/(N-1)$. 
This yields a \emph{discrete} signal $\mathbf{y}=(y_1, \dots, y_N)  \in \mathbb{C}^N$.
The continuous signal $f$ is then said to be sampled at frequency $1/\Delta_{x}$.
Furthermore, in practice, the frequency variable needs to be discretized as well. 
In the present work, motivated by the connection with the  Wavelet Transform, the discrete Analytic Stockwell Transform is computed on $M$ logarithmically spaced frequency channels: $\forall m \in \lbrace 1, \hdots, M\rbrace$,
\begin{align}
\label{eq:num}
   \log_2(\xi_m) = \log_2(\xi_{\min}) + m \Delta_{\log_2\xi}, \quad  \Delta_{\log_2\xi} = \frac{\log_2(\xi_{\max})-\log_2(\xi_{\min})}{M-1}
\end{align}
where $\Delta_{\log_2\xi}$ is the log-frequency resolution, and the range $[\xi_{\min}, \xi_{\max}]$ depends on the time--frequency pattern of interest~\cite[Theorem 5.15]{vetterli2014foundations}.

\begin{definition}
\label{def:discrete-AST}
Let $M,N \in \mathbb{N}^*$ and $\beta > 0$.
The \emph{Discrete Analytic Stockwell Transform} of parameter $\beta$ with $N$ time samples and $M$ frequency channels is the application $\mathbf{S}_{\varphi_\beta} : \mathbb{C}^N \rightarrow \mathbb{C}^{N\times M}$ associating to the discrete signal $\mathbf{y}=(y_1, \dots, y_N)  \in \mathbb{C}^N$, the matrix defined as
\begin{equation}
    (\mathbf{S}_{\varphi_\beta} \mathbf{y})_{j, m}=\sum_{n=0}^{N-1} y_n \overline{\varphi_\beta(\xi_m (x_n-x_j))} \e^{-\ii \frac{2 \pi}{N} \xi_m \cdot x_n}
    \label{eq:dicrete_ast}
\end{equation}
for indices $j \in\{0, \ldots, N-1\}, m \in\{0, \ldots, M-1\}$ and discrete times and frequencies $\lbrace x_n, \xi_m \rbrace_{n,m}$ defined in Equations~\eqref{eq:xn} and~\eqref{eq:num} respectively.
\end{definition}

\begin{definition}
Let $N \in \mathbb{N}^*$. The \emph{discrete} $N$-dimensional white noise $\boldsymbol{\wn} \in \mathbb{C}^N$ is the Gaussian random vector of zero mean and identity covariance matrix:
\begin{align}
    \boldsymbol{\zeta} = (\wn_1, \hdots, \wn_N), \quad \wn_n \sim \mathcal{N}(0,1)
\end{align}
where $\wn_1, \hdots, \wn_N$ are independent Gaussian variables.
\end{definition}

\begin{remark}
Following the same reasoning as the one developed in~\cite[Section 5]{Bardenet2021}, the Analytic Stockwell Transform of continuous white noise $S_{\varphi_{\beta}}\wn$ of Equation~\eqref{eq:def_Sbeta-wn} can be discretized consistently applying the discrete Stockwell transform of Definition~\ref{def:discrete-AST} to discrete white noise. This is notably due to orthogonality of the family $\lbrace f_n \rbrace_{n\in \mathbb{N}}$, defined in Lemma~\ref{lem:def_fn}, used to construct the continuous white noise $\wn$.
\end{remark}

Numerically, the discrete Analytic Stockwell Transform is computed in the Fourier domain, using a Riemann approximation of Equation~\eqref{eq:Wbeta} combined with Equation~\eqref{eq:AST-to-CWT}.
Following~\cite{Flandrin2015} the localization of zeros is performed using the \emph{Minimal Grid Neighbors} algorithm, introduced initially in the Euclidean setting in the code\footnote{https://perso.ens-lyon.fr/patrick.flandrin/zeros.html} accompanying the seminal paper~\cite{Flandrin2015}, and which were then used in Euclidean~\cite{Bardenet2020,Abreu2018}, hyperbolic~\cite{Koliander2019} and spherical geometries~\cite{Bardenet2021,Pascal2022}.
The method is described in details in~\cite[Section~4.3]{pascal2024pointprocessesspatialstatistics} in the context of standard time--frequency analysis, and is applied as is to the discrete Analytic Stockwell Transform of Definition~\ref{def:discrete-AST}.

\begin{remark}
Recently an \emph{Adaptive} Minimal Grid Neighbors algorithm has been derived in Euclidean geometry, benefiting from a solid theoretical ground and strong robustness~\cite{escudero2024efficient}.
Intensive numerical experiments has shown that both the \emph{Adaptive} and standard Minimal Grid Neighbors algorithms perform equivalently and significantly outperform thresholding strategies~\cite[Section 2.3]{escudero2024efficient}. Extending this theoretical analysis to the hyperbolic geometry is beyond the scope of the present work, though it constitutes a very interesting line of research for future work.
\end{remark}

Figure~\ref{fig:ast} displays the modulus of the discrete Analytic Stockwell Transform of parameter $\beta$ such that $\alpha = 2\beta + 1 = 300$ computed over a realization of discrete white noise of $N=4000$ points with sampling frequency $\nu_s = 400$~Hz, corresponding to a time resolution $\Delta_{x} = 1/\nu_s$, and for $M=600$ frequency channels ranging from $\xi_{\min} = 2^{-6}$ to $\xi_{\max} = 2^{3.3}$, together with the associated zeros.
The representation on the Poincaré disk, on the right of Figure~\ref{fig:ast}, illustrates that, in the \emph{hyperbolic} geometry in which the metric explodes as one gets closer to the border of the disk $\mathbb{D}$, the zeros spread very evenly.
Similarly, the representation in the time--scale plane, on the left of Figure~\ref{fig:ast}, shows the uniform spread in the hyperbolic upper-half plane.

\begin{figure}[ht]
    \centering
   \subfloat{\includegraphics[width=0.48\linewidth]{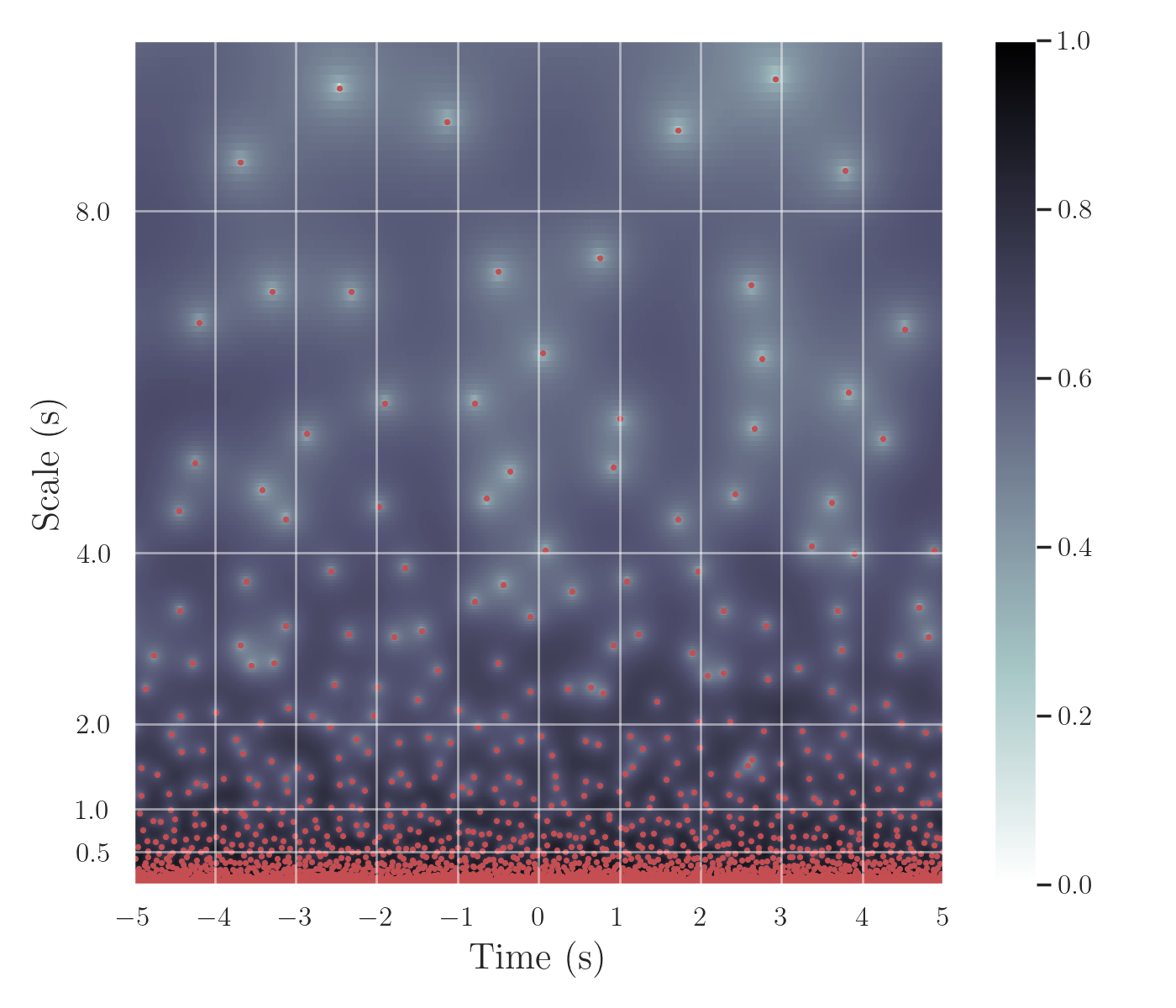}} \hfill
     \subfloat{\includegraphics[width=0.48\linewidth]{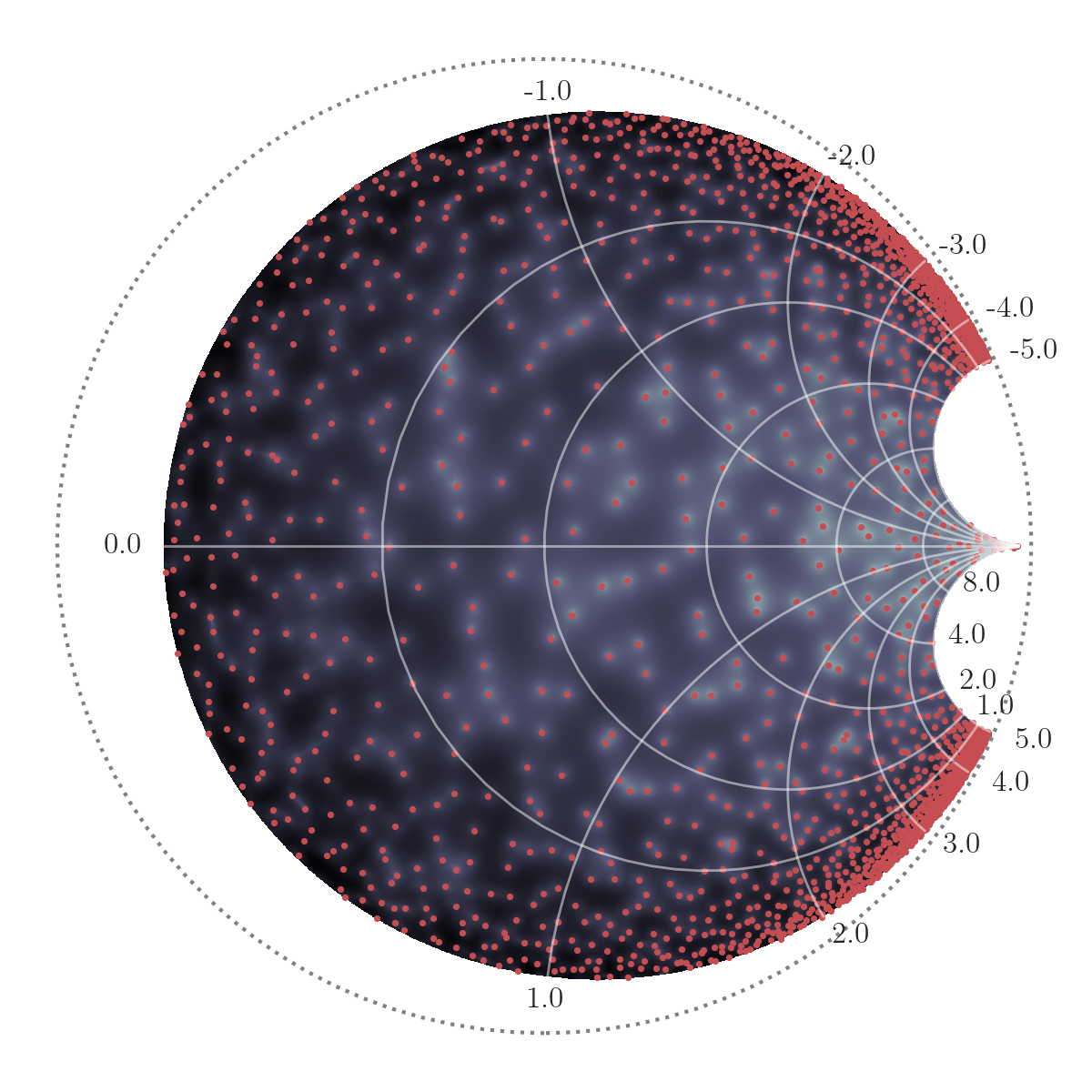}}  
     \caption{Log-modulus of the Analytic Stockwell Transform of white noise (background colormap) and its zero set (red dots) $\mathfrak{Z}^{(\alpha)}$. The discrete white noise is considered to span a time period of $x_{\max} - x_{\min} = 10$~s with a sampling frequency of $400$~Hz. The parameter of the Cauchy wavelet $\beta$ is chosen such that $\alpha=2\beta+1=300$.
     (Left) Representation in the $(x, \xi^{-1})$-plane where the inverse frequency $\xi^{-1}$ is called the \emph{scale} in reference to the related Cauchy Wavelet Transform.
      (Right) Representation in the Poincaré disk $\mathbb{D}$ parameterized by $ \vartheta(x + \ii \xi^{-1})$ for $(x,\xi^{-1})$ running over the same grid, where $\vartheta$ is the Cayley transform of Equation~\eqref{eq:Cayley}. The dotted circle represents the border of $\mathbb{D}$.
     \label{fig:ast}}
\end{figure}

\subsection{Spatial Statistics}

Point processes in a \emph{general} metric space can be characterized by their $k$-\emph{points correlation functions}~\cite[Section 5.4]{daley2003introduction}, describing the probabilistic interactions between the points of a realization of the point process, e.g., the short-range repulsion observed in the zero pattern of analytic time--frequency transforms~\cite{Flandrin2015,Koliander2019,Bardenet2020,Bardenet2021, Pascal2022}.
As emphasized in~\cite[Appendices B,C]{moller2003statistical}, spatial statistics tools can be derived from a general metric on any ambient space~\cite{daley2003introduction}. As this work focuses on the zeros of the Stockwell Transform lying in the Poincaré disk equipped with the hyperbolic distance introduced in Definition~\ref{def:hyperbolic}, all tools and mathematical results in the remaining of the paper are formulated directly in the hyperbolic geometry framework.

\begin{definition}
Let $\mathfrak{Z}$ be a point process on $\Lambda \subseteq \mathbb{C}$.
For $k \in \mathbb{N}^*$, if existing, the $k$-point correlation function $\rho_k$ of $\mathfrak{Z}$ is defined by 
\begin{align}
    \int_{\Lambda^k} \Psi(w_1, \hdots, w_k) \rho_k(w_1, \hdots, w_k)\, \mathrm{d}w_1\hdots\mathrm{d}w_k = \mathbb{E}\left[ \!  \sum_{\begin{array}{c} {\scriptstyle (w_1, \hdots, w_k) \in \mathfrak{Z}^k} \\ {\scriptstyle w_1 \neq \hdots \neq w_k} \end{array}} \! \! \! \! \! \! \ \ \! \! \! \! \ \! \! \! \!\Psi(w_1, \hdots, w_k)\right],
\end{align}
for any bounded compactly supported measurable map $\Psi : \Lambda^k \rightarrow \mathbb{C}$, and where $\mathrm{d}w$ denotes the metric in the ambient space, e.g., the hyperbolic metric $\mathrm{d}\mathsf{m}_{\mathbb{D}}$ in the Poincaré disk.
In particular, $\rho_1$ is often called the \emph{intensity} of the point process as, by definition, integrating it over a domain yields the expected number of points falling into.
As for the two-point correlation function $\rho_2$, it is commonly normalized to get the \emph{pair-correlation function}~\cite[Chapter 4]{moller2003statistical}, defined as
\begin{align}
    \widetilde{g}(w_1,w_2)= \frac{\rho_2(w_1, w_2)}{\rho_1(w_1)\rho_1(w_2)}, \quad w_1, w_2 \in \Lambda.
\end{align}
\end{definition}

By Proposition~\ref{thm:isometries}, the point process corresponding to the zero set of the hyperbolic Gaussian analytic function $\mathfrak{Z} = {\GAF_{\mathbb{D}}^{(\alpha)}}(\lbrace 0 \rbrace)$, and hence to the zeros of the Analytic Stockwell Transform of parameter $\beta = (\alpha - 1)/2$ of white noise, is invariant under isometries of $\mathbb{D}$, which has two major consequences.
First, the intensity $\rho_1$, defined relatively to the hyperbolic metric $\mathsf{m}_{\mathbb{D}}$ is constant.
Second, the two-point correlation function only depends on the hyperbolic distance between $w_1$ and $w_2$, or equivalently on the pseudo-hyperbolic distance.
Consequently, the pair correlation function of the zeros of the hyperbolic Gaussian analytic function of parameter $\alpha$ writes
\begin{align}
    g^{(\alpha)}(r) = \widetilde{g}^{(\alpha)}(w_1, w_2), \quad r = \mathsf{p}_{\mathbb{D}}(w_1 , w_2).
\end{align}
Remarkably, the joint intensities of zero sets of Gaussian analytic functions have closed-form expressions.
Applying the Edelman-Kostlan formula~\cite[Section 2.4.1]{Hough2009}~\cite[Proposition 16]{pascal2024pointprocessesspatialstatistics} yields the explicit expression\footnote{The difference from the formula derived in~\cite{Abreu2018,Koliander2019} comes from a difference in the normalization of the hyperbolic metric of Equation~\eqref{eq:hyper-metric} and of the metric used in~\cite{Abreu2018,Koliander2019}.} of the intensity of the zeros of $\GAF_{\mathbb{D}}^{(\alpha)}$:
\begin{align}
\label{eq:rho1}
    \rho_1^{(\alpha)} = \frac{\alpha}{\pi}.
\end{align}
To access the higher-order joint intensities, one can leverage the formulas
involving determinants and permanents provided in~\cite[Corrolary 3.4.1]{Hough2009}, to get the following explicit formula for the pair correlation functions~\cite{Abreu2018}
\begin{equation}
    g^{(\alpha)}(r) = \frac{ s^\alpha \big(\alpha(1-s) - s(1- s^\alpha)\big)^2 + \big(\alpha s^\alpha(1-s) - (1-s^\alpha)\big)^2 }{(1-s^\alpha)^3}.
    \label{eq:g_th}
\end{equation}
where $s = 1-r^2$.
By Theorem~\eqref{thm:AST-GAF}, Equation~\eqref{eq:g_th} corresponds both to the theoretical pair correlation function of the zeros of $\GAF^{(\alpha)}_{\mathbb{D}}$ and of the zeros of the Analytic Stockwell Transform of parameter $\beta = (\alpha-1)/2$ of white noise. For reference, $g^{(\alpha)}$ is displayed as a dashed black line in Figure~\ref{fig:numerical_g} for three values of $\alpha$.
One remarks, going from the left to the right of Figure~\ref{fig:numerical_g}, that as $\alpha$ increases the short-range repulsion between zeros visible at the vicinity of $r = 0$ weakens while the  range of attraction between zeros, correspond to the bump above one, gets narrower and shifts toward $r=0$.
To provide quantitative numerical evidence showing practical relevance of Theorem~\ref{thm:AST-GAF}, the theoretical expressions of both the first intensity  and the pair correlation function are to be compared to the empirical spatial statistics of the zeros of the Analytic Stockwell Transform of parameter $\beta = (\alpha-1)/2$ of white noise.

The need for accurate and robust estimators of empirical first intensity and pair correlation functions resulted in a rich variety of spatial statistics estimation strategies with solid theoretical guarantees~\cite[Section 4.3]{moller2003statistical}.
Notably, edge correction techniques have been developed to handle the fact that in practice only a bounded window is observed~\cite[Section 4.3.3]{moller2003statistical}.

\begin{figure}[ht]
 \centering  
 \subfloat[\label{sfig:isocontours}Isocontours (light orange to black circles) depicting equal pseudo-hyperbolic distances $\mathsf{p}_{\mathbb{D}}$, and thus by Equation~\eqref{eq:dph_disk} equal hyperbolic distance, to three selected centers (orange stars).]{\includegraphics[width=0.48\linewidth]{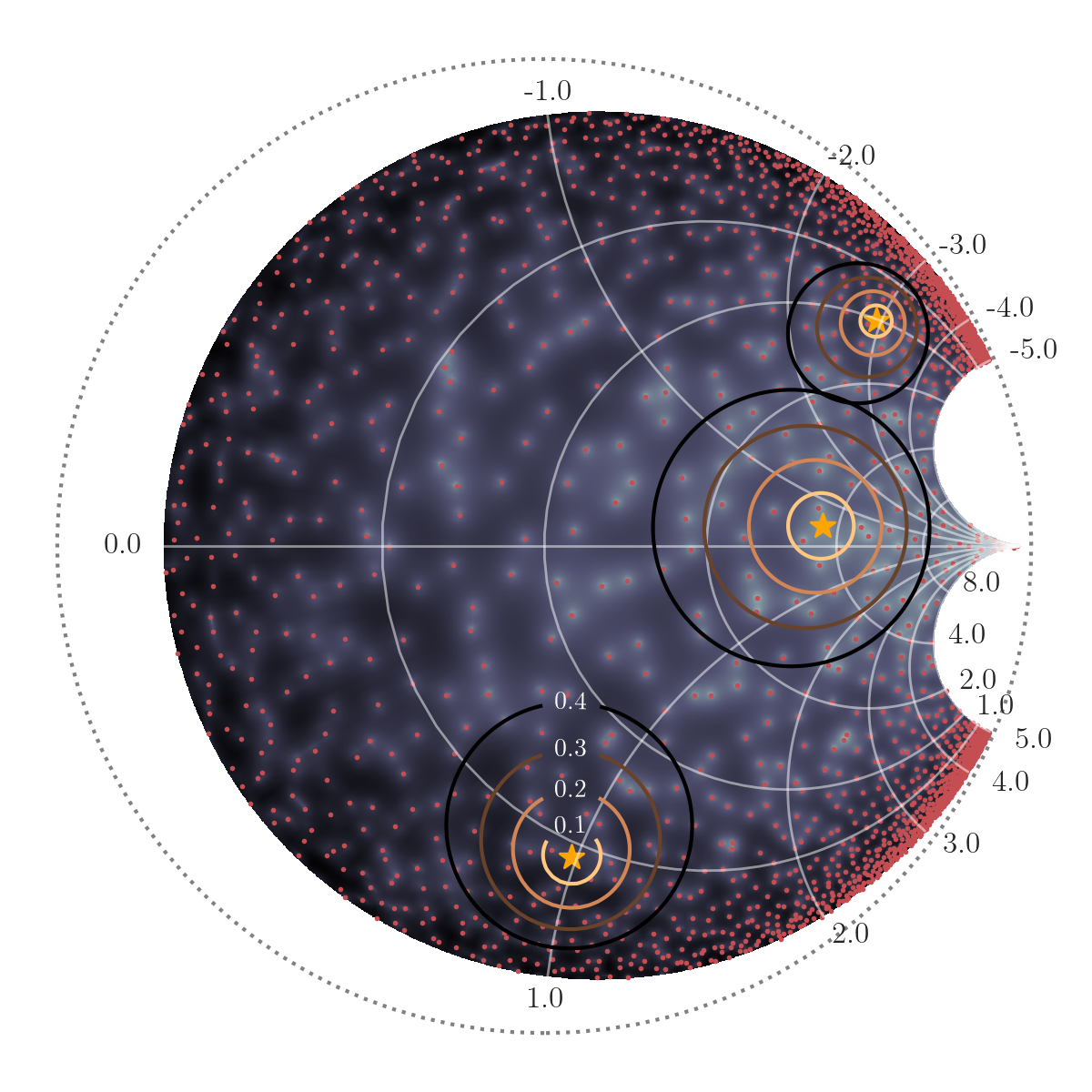}} \hfill
   \subfloat[\label{sfig:outer}\emph{Inner} points (red dots) in $\mathfrak{Z}^{(\alpha)} \setminus \mathfrak{B}^{(\alpha)}$ are used as center point to enumerate the density of zeros in their vicinity,  \emph{outer} points (pink dots) in $\mathfrak{B}^{(\alpha)}$ are too close to the border and hence excluded from the ensemble average.]{\includegraphics[width=0.48\linewidth]{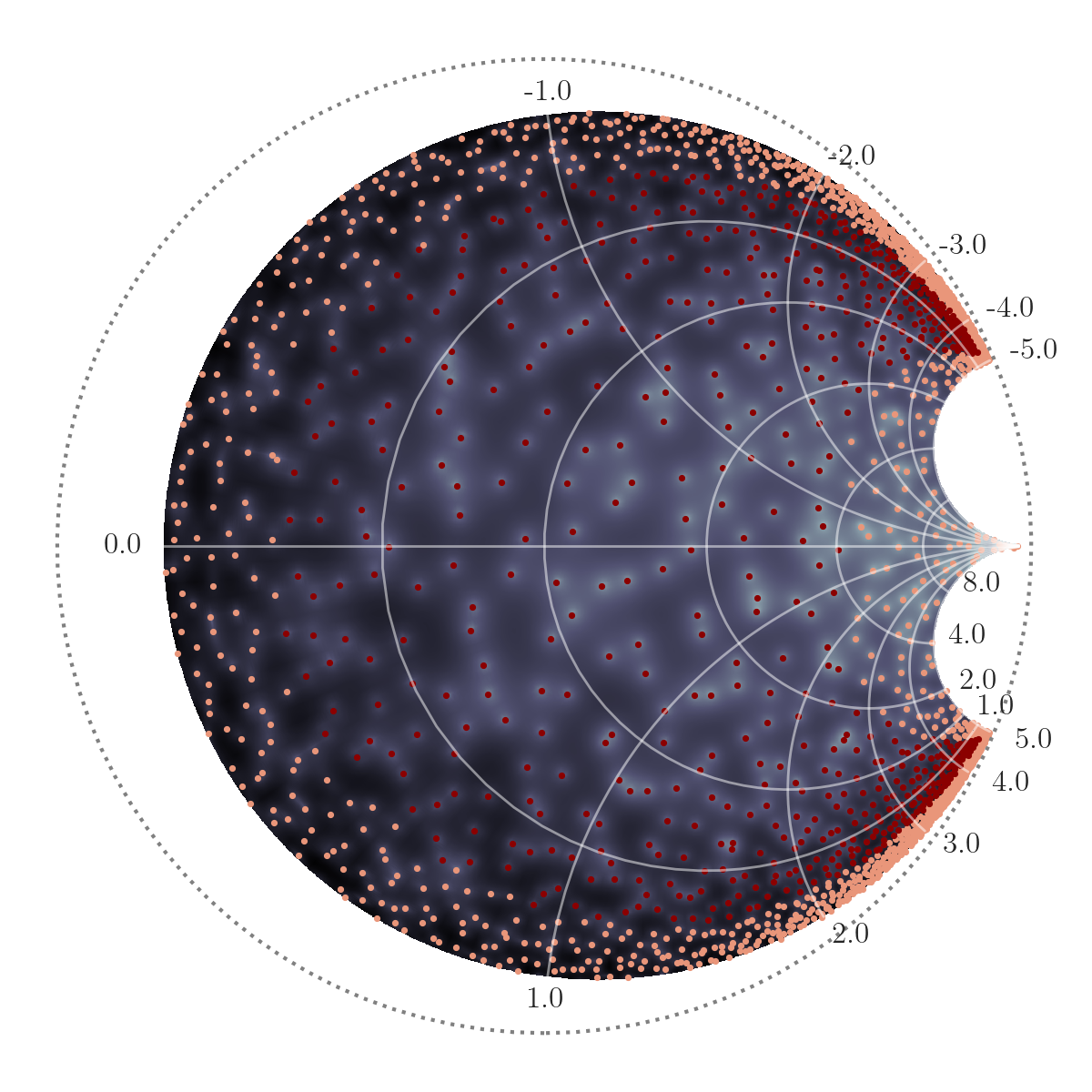}} 
     \caption{Key features of the implementation of the pair correlation function estimator  $\widehat{g}^{(\alpha)}$: counting zeros in hyperbolic disks (left) and border effect correction (right).}
     \label{fig:pictures_g}
 \end{figure}
 
First of all, to assess the ability of the hyperbolic counterpart of Minimal Grid Neighbor algorithm to correctly detect zeros in hyperbolic geometry with the discretization strategy described in Definition~\ref{def:discrete-AST}, the first intensity $\rho_1^{(\alpha)}$ of the zeros of the \emph{discrete} Stockwell Transform  of white noise of parameter $\beta = (\alpha-1)/2$ is estimated by counting the number of zeros in a growing hyperbolic disk centered at zero and of increasing hyperbolic radius $r'$ and dividing it by the \emph{hyperbolic} area $\mathcal{A}(r')$ of this disk, obtained integrating the hyperbolic metric~\eqref{eq:hyper-metric},
\begin{align}
\label{eq:hatrho1}
   \hat{\rho}_1^{(\alpha)} =\frac{1}{\mathcal{A}(r')} \sum_{z \in \mathfrak{Z}^{(\alpha)}} \mathds{1}_{\mathsf{d}_{\mathbb{D}}(z,0) < r'}, \quad  \mathcal{A}(r') = 4\pi \sinh(r'/2)^2
\end{align}
where $\mathfrak{Z}^{(\alpha)}$ denotes the zero set of the Analytic Stockwell transform of parameter $\beta = (\alpha-1)/2$ of white noise\footnote{Since hyperbolic and pseudo-hyperbolic distances are in one-to-one correspondence the above computations could be implemented equivalently replacing $\mathsf{d}_{\mathbb{D}}$ by $\mathsf{p}_{\mathbb{D}}$ and $r'$ by $r = \tanh(r'/2)$ according to~\eqref{eq:dph_disk}.} and $\mathds{1}_{\mathcal{A}}$ denotes the indicative function of the probabilistic event $\mathcal{A}$.

Then, the pair correlation function is estimated using the nonparametric estimator of~\cite[Section 4.3.5]{moller2003statistical}, the kernel being chosen as a step function of support size $h > 0$.
The number of zeros in the observed window being large enough, border effects are managed using the \emph{reduced-sampled} estimate~\cite[Section 4.3.3]{moller2003statistical}.
The pair correlation function estimator writes
\begin{equation}
	\label{eq:hatgalpha}
	\widehat{g}^{(\alpha)}(r) = \frac{(1-r^2)^2}{4\alpha h r} \sum_{z \in \mathfrak{Z}^{(\alpha)} \setminus \mathfrak{B}^{(\alpha)}}\sum_{w \in \mathfrak{Z}^{(\alpha)}} \mathds{1}_{\left\lbrace \left|\mathsf{p}_{\mathbb{D}}(z,w) - r\right| < \frac{h}{2} \right\rbrace}
\end{equation}
where $\mathfrak{Z}^{(\alpha)}$ is the zero set of the Analytic Stockwell transform of parameter $\beta = (\alpha-1)/2$ of white noise.
To correct the bias induced by the observation of a \emph{bounded} window the averaging over $z$ is performed only on the zeros which are far enough from the border of the observed hyperbolic window.
Denoting $ \mathfrak{B}^{(\alpha)} \subset \mathfrak{Z}^{(\alpha)}$ the zeros which are too close to the window's border, represented as pink dots in Figure~\ref{sfig:isocontours}, the averaging is performed on \emph{inner} points $z \in \mathfrak{Z}^{(\alpha)} \setminus  \mathfrak{B}^{(\alpha)}$, represented as dark red points in Figure~\ref{sfig:outer}, while the counting runs on every points $w \in \mathfrak{Z}^{(\alpha)}$.
The practical implementation of $\widehat{g}^{(\alpha)}$ amounts to building concentric rings of pseudo-hyperbolic radius $r$ around each point in the zeros set, as illustrated in Figure~\ref{sfig:isocontours}, counting the number of points in each pseudo-hyperbolic ring, which by~\eqref{eq:dph_disk} are also hyperbolic rinds, averaging it over the observed zeros, and finally normalizing by the theoretical first-intensity function provided in Equation~\eqref{eq:rho1}.

\begin{remark}
It is worth insisting on the fact that, although mostly used in Euclidean spaces, the aforementioned spatial statistics framework applies in any general metric space~\cite[Appendices B,C]{moller2003statistical}.
The customization to hyperbolic geometry then reduces to replacing the Euclidean metric and distance by their hyperbolic counterparts. This affects the shapes of the disks and the identification of \emph{outer points}, as can be seen on Figure~\ref{fig:pictures_g}, but not the theoretical results, proven in a general setting~\cite[Appendices B,C]{moller2003statistical}, \cite[Chapters 6,7]{daley2003introduction}. Notably, the edge correction implemented in Equation~\eqref{eq:hatgalpha} ensures the unbiasedness of the pair correlation function estimator regardless of the geometry.
\end{remark}

To compare the expected and empirical behaviors of the zeros of the Analytic Stockwell Transform of white noise, $R = 100$ realizations of the discrete white noise with $N=4000$ time stamps are generated. Their discrete Stockwell Transforms are computed at time resolution $ 1/\nu_s$ with $\nu_s = 4000$~Hz and across $600$ frequency channels. The Minimal Grid Neighbors algorithm then yields $R$ realizations of the point process $\mathfrak{Z}^{(\alpha)}$ each consisting in the zeros of a discrete Stockwell Transform of white noise. Finally the $R$ empirical first intensities $\widehat{\rho}_1^{(\alpha)}$ and pair correlation functions $\widehat{g}^{(\alpha)}$ are estimated from Equation~\eqref{eq:hatgalpha}, and averaged leading to the averaged empirical first intensity $\langle \widehat{\rho}_1^{(\alpha)}\rangle$ and  pair correlation function $\langle \widehat{g}^{(\alpha)}\rangle$ displayed respectively in Figure~\ref{fig:numerical_g} left and right columns as solid blue lines with star ticks, accompanied with their $5\%$ and $95\%$ quantiles.

Provided that the hyperbolic disk used to estimate the first intensity is large enough, the empirical estimates $\widehat{\rho}_1^{(\alpha)}$ are very close to the expected value provided in Equation~\eqref{eq:rho1}. This means that the Minimal Grid Neighbors algorithms, applied to the discrete Stockwell Transform of Definition~\ref{def:discrete-AST}, detects accurately enough the zeros of the transform so that the first order statistics are in very good agreement with the known statistics of the zeros of the hyperbolic Gaussian analytic function.
This, \emph{despite} the fact that the discretization of the Stockwell transform, regular in time and in frequency, is suboptimal from an hyperbolic geometry point of view.
Furthermore, empirical estimates $\widehat{g}^{(\alpha)}$ show a good match with the theoretical pair correlation function for all four values of $\alpha$, although with a slightly larger $95\%$ credibility region for small $\alpha$.
Overall, the first and second order spatial statistics of the zeros of the discrete Stockwell Transform of white noise resemble very much the theoretical statistics of the zeros of the hyperbolic Gaussian analytic function.

\begin{remark}
Note that for all the numerical experiments presented in this paper, the sampling frequency is kept the same.
Furthermore, the proposed implementation of the discretized Analytic Stockwell Transform\footnote{\url{https://github.com/courbot/ast}} makes use of the Fast Fourier Transform algorithm and thus the frequencies on which the discrete Stockwell Transform is evaluated is \emph{fixed} as well. As a consequence, both the time and frequency bin sizes are independent of $\alpha$ which controls the size of the support of the analysis window $\varphi_{\beta}$ through $\beta = (\alpha - 1)/2$.
Consequently for very low values of $\alpha$, the time and frequency bin sizes of the discrete Analytic Stockwell Transform turn out to be inappropriate to capture correctly the time--frequency information encoded in the continuous transform. This affects the numerical retrieval of zeros and makes it less accurate which explains why the difference between the theoretical and numerical estimates of the pair correlation function increases as $\alpha$ decreases as illustrated in Figure \ref{fig:numerical_g}.
\end{remark}

These experiments show that the empirical zeros of the discrete Stockwell Transform of white noise behaves statistically very similarly compared to the zeros of the hyperbolic Gaussian analytic function. This opens the way for leveraging theses statistics to perform practical signal processing tasks: such as detection or denoising following the recent works~\cite{Flandrin2015,Abreu2018,Koliander2019,Bardenet2020,Pascal2022}.

\begin{figure}[!ht]
\centering
\subfloat[$\alpha=50$.]{\includegraphics[width=0.75\linewidth]{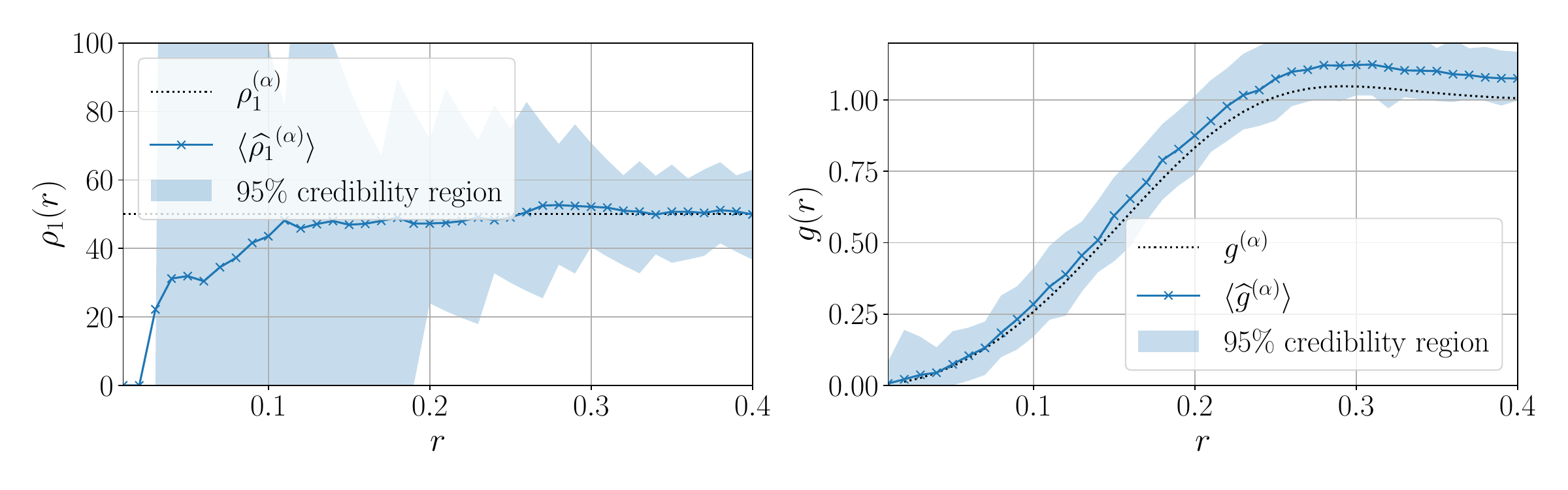}}\\
\subfloat[$\alpha=100$.]{\includegraphics[width=0.75\linewidth]{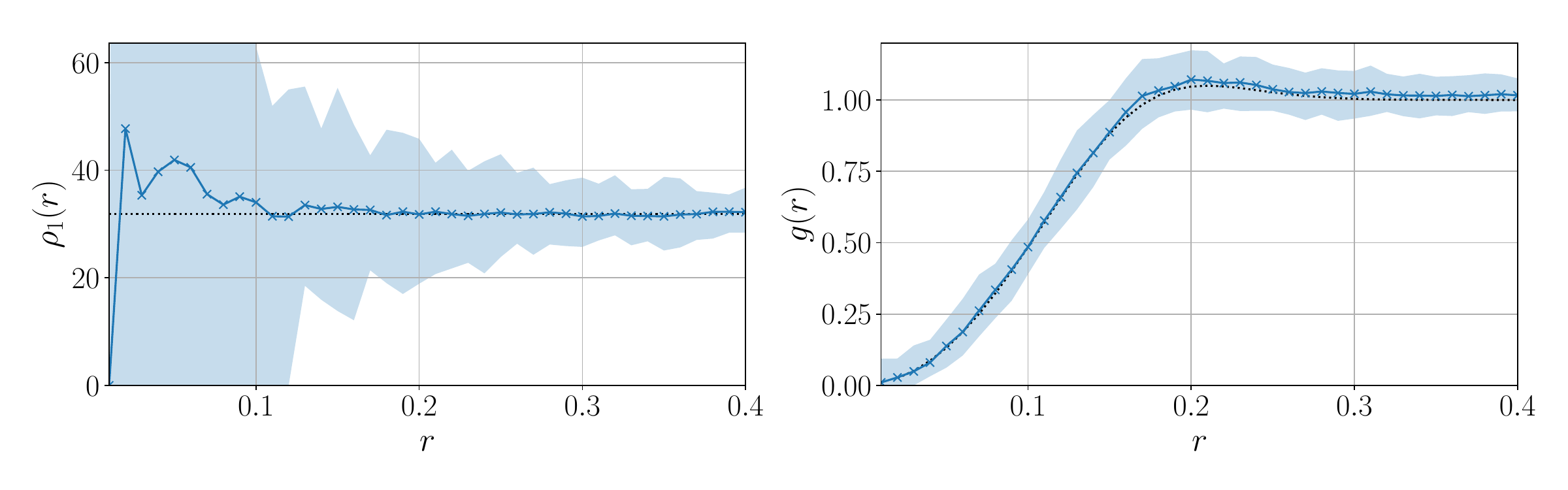}}\\
\subfloat[$\alpha=300$.]{\includegraphics[width=0.75\linewidth]{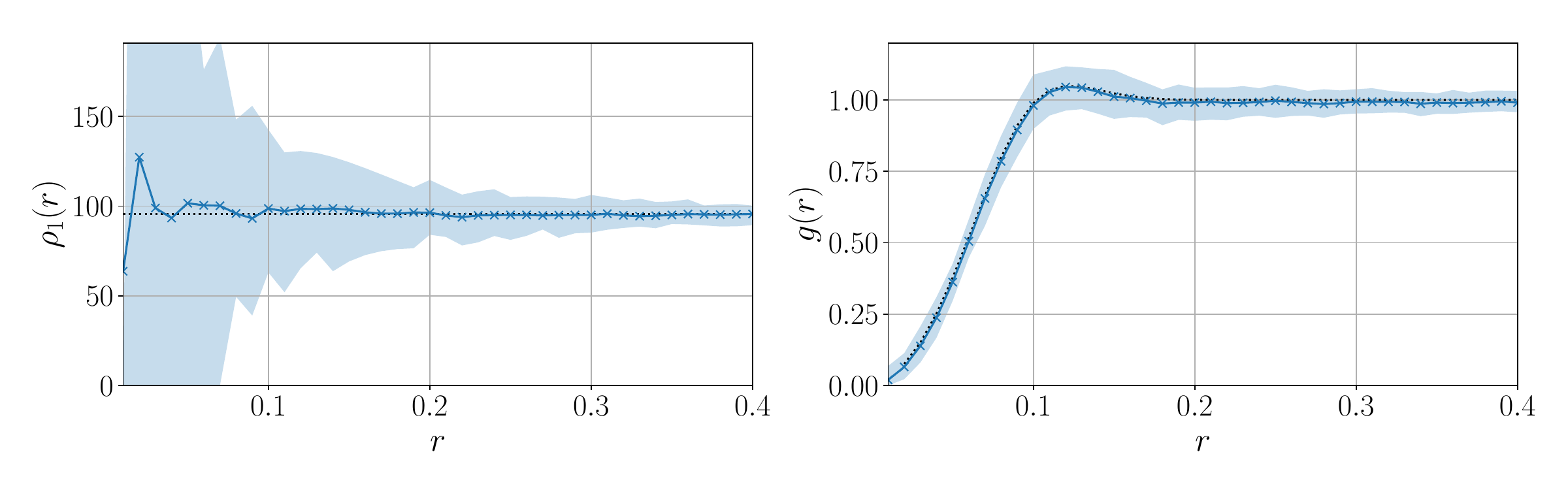}}\\
\subfloat[$\alpha=500$.]{\includegraphics[width=0.75\linewidth]{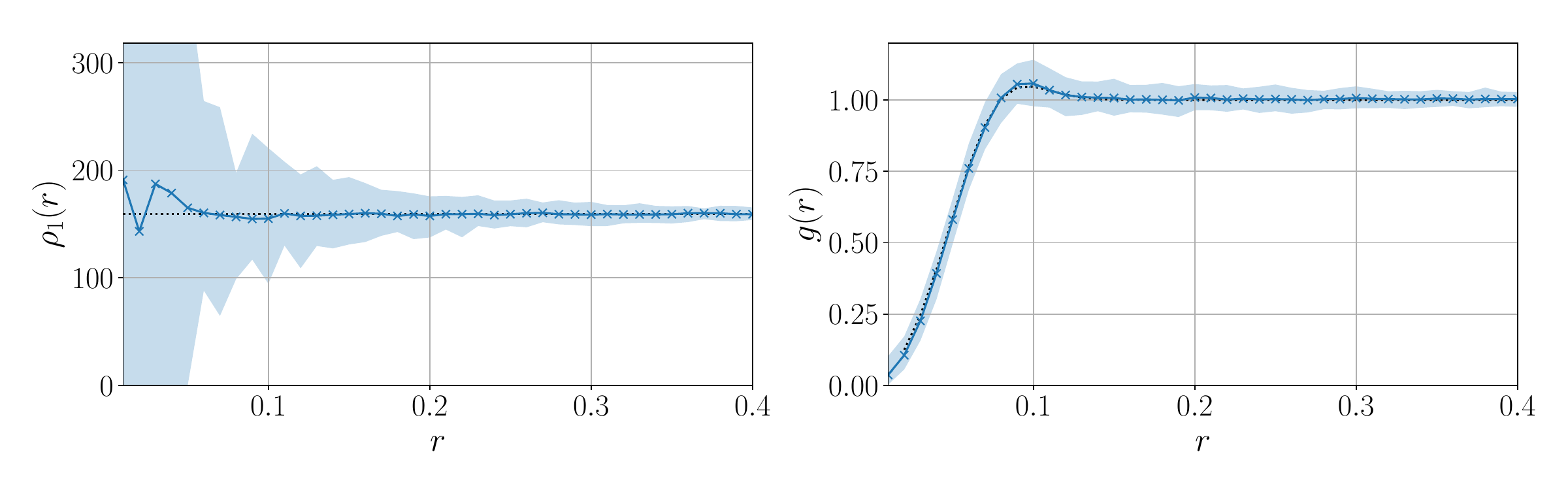}}\\
     \caption{Comparison of theoretical and empirical spatial statistics of the zeroes
     of the Analytic Stockwell Transform of parameter $\beta = (\alpha-1)/2$ of white noise for $\alpha \in \lbrace 50, 100, 300, 500\rbrace$.
     (Left) First intensity $\rho_1^{(\alpha)}$. (Right) pair correlation function $g^{(\alpha)}$.
     Theoretical spatial statistics,~\eqref{eq:rho1} and~\eqref{eq:g_th}, are represented as dashed black lines and averaged empirical statistics,~\eqref{eq:hatrho1} and~\eqref{eq:hatgalpha}, as solid blue lines.
     Ensemble averaging and quantiles estimation are performed over $R = 100$ realizations of the discrete white noise of $N= 4000$ points, corresponding to a time period of $x_{\max} - x_{\min} = 1$~s long and a sampling frequency $\nu_s = 4000$~Hz. The $x$-axis is expressed in pseudo-hyperbolic distances~\eqref{eq:dph_disk}.
     \label{fig:numerical_g}}
\end{figure}

\section{Conclusion and Perspectives}
\label{sec:conclusion}

After evidencing the existence of an Analytic Stockwell Transform, whose analysis window have been designed leveraging a modulated Cauchy wavelet, the zeros of the Analytic Stockwell Transform of white noise are characterized.
To that aim a well-suited construction of continuous white noise has been presented, enabling to establish a strong connection between the Analytic Stockwell Transform of white noise and the hyperbolic Gaussian analytic function.
This link permitted to transfer the remarkable properties of the zeros of the hyperbolic Gaussian analytic function, among which the invariance under isometries of the Poincaré disk, to the zeros of the Analytic Stockwell Transform of white noise.
Finally, intensive Monte Carlo simulations are implemented to illustrate and support this result; notably the empirical pair correlation function of the zeros of the Analytic Stockwell Transform of white noise is shown to coincide 
 accurately with the expected theoretical pair correlation function of the zero set of the hyperbolic Gaussian analytic function.
A documented Python toolbox reproducing all the experiments has been produced and made publicly available.\footnote{\url{https://github.com/courbot/ast}}

The characterization of the distribution of zeros under a white-noise hypothesis is the cornerstone of recent original zero-based detection procedures~\cite{Bardenet2020,Pascal2022,miramont2024unsupervised}.
Hence, the direct continuation of the present work would consists in leveraging the knowledge gained on the zeros Analytic Stockwell Transform to perform signal detection in high-noise contexts.
Though, application of spatial statistics in signal processing do not restrict to detection and zero-based denoising and separation strategies have shown promising results~\cite{Flandrin2015,Bardenet2020,miramont2024unsupervised}.
The nice resolution properties and great versatility of the Analytic Stockwell Transform make it a good candidate for the development of zero-based component separation algorithms. Although in this work, the {phase} information of the Stockwell Transform, which differs from the phase information contained in the Wavelet Transform, has not been exploited, it could interestingly be used to complete the information of zeros with the aim of consolidating and complementing methods for signal detection and denoising since it seems that both information are intimately related ~\cite{balazs2016pole}.
Finally, as the accuracy of zero-based procedures relies on the precise localization of zeros, novel zero detection techniques will be explored. To go beyond the Minimal Grid Neighbor algorithm, two paths will be investigated: first, an hyperbolic counterpart of the Adaptive Minimal Grid Neighbor algorithm~\cite{escudero2024efficient} will be derived, and second, the off-the-grid algorithm developed in~\cite{courbot2023sparse} will be generalized to the hyperbolic geometry framework with the overarching goal of enhancing the quality of spatial statistics estimators, and hence reaching state-of-the-art detection and reconstruction performance.

\section{Acknowledgements}

The authors thank Rémi Bardenet for fruitful discussions and references.

\clearpage
\newpage

\bibliographystyle{plain}
\bibliography{References}

\end{document}